\DeclareMathOperator{\E}{\mathbb{E}}
\DeclareMathOperator*{\cO}{\mathcal{O}}
\theoremstyle{plain}
\newtheorem{theorem}{Theorem}
\newtheorem{lemma}{Lemma}
\theoremstyle{definition}
\newtheorem{definition}{Definition}
\newtheorem{assumption}{Assumption}
\theoremstyle{remark}
\newcommand{\norm}[1]{\left \lVert #1 \right\rVert }
\icmltitlerunning{Accelerating Quantum RL with a Quantum NPG Based Approach}
\begin{document}

\twocolumn[
\icmltitle{Accelerating Quantum Reinforcement Learning \\ with a Quantum Natural Policy Gradient Based Approach}




\begin{icmlauthorlist}
\icmlauthor{Yang Xu}{PurdueIE}
\icmlauthor{Vaneet Aggarwal}{PurdueIE}
\end{icmlauthorlist}

\icmlaffiliation{PurdueIE}{Purdue University, West Lafayette, IN, USA}
\icmlcorrespondingauthor{Yang Xu}{xu1720@purdue.edu}
\icmlcorrespondingauthor{Vaneet Aggarwal}{vaneet@purdue.edu}

\icmlkeywords{Machine Learning, ICML}

\vskip 0.3in
]



\printAffiliationsAndNotice{}  

\begin{abstract}

We address the problem of quantum reinforcement learning (QRL) under model-free settings with quantum oracle access to the Markov Decision Process (MDP). This paper introduces a Quantum Natural Policy Gradient (QNPG) algorithm, which replaces the random sampling used in classical Natural Policy Gradient (NPG) estimators with a deterministic gradient estimation approach, enabling seamless integration into quantum systems. While this modification introduces a bounded bias in the estimator, the bias decays exponentially with increasing truncation levels. This paper demonstrates that the proposed QNPG algorithm achieves a sample complexity of $\tilde{\mathcal{O}}(\epsilon^{-1.5})$ for queries to the quantum oracle, significantly improving the classical lower bound of $\tilde{\mathcal{O}}(\epsilon^{-2})$ for queries to the MDP.

\end{abstract}
\section{Introduction}
Reinforcement Learning (RL) is a foundational framework for sequential decision-making, with applications spanning robotics, finance, transportation, and healthcare \citep{al2019deeppool,gonzalez2023asap,tamboli2024reinforced,wang2023recent}. The goal of an RL agent is to derive a policy that maximizes the discounted sum of expected rewards through interactions with the environment. A prominent approach to solving this problem is the policy gradient (PG) method, which optimizes directly in the policy space. In classical RL, the optimal sample complexity of this approach has been established as $\tilde{\mathcal{O}}(\epsilon^{-2})$ \citep{mondal2024improved}. Quantum computing has shown potential for speedups in areas such as mean estimation \citep{hamoudi2021quantum}, multi-armed bandits \citep{wan2023quantum,wu2023quantum}, and tabular reinforcement learning \citep{ganguly2023quantum,zhongprovably}. This work aims to explore potential quantum speedups for reinforcement learning with general parameterized policies. 

A key enabler of quantum speedups is the quantum evaluation oracle, which leverages quantum parallelism, amplitude amplification, and entanglement. This powerful concept has been utilized in foundational algorithms such as Grover's search \citep{grover1996fast}. In this work, we assume access to a quantum transition oracle and a quantum initial state oracle, which serve as quantum analogs for sampling from the classical environment and the initial state distribution, respectively. Using these quantum oracles, we aim to demonstrate the speedups that quantum computing can provide in the context of reinforcement learning.

Quantum computing principles have shown significant improvements in areas such as quantum mean estimation \citep{hamoudi2021quantum} and quantum stochastic convex optimization \citep{sidford2024quantum}. However, quantum reinforcement learning (RL) presents unique challenges. Unlike classical algorithms, quantum methods rely on deterministic operations, complicating the design of unbiased estimators for policy gradients. This paper addresses the trade-off between bias and unbiased estimation by introducing a novel deterministic algorithm that facilitates quantum-compatible gradient estimation. While this approach introduces a bounded bias, it achieves orders reduction in sample complexity compared to its classical counterparts.

\subsection{Challenges and Contributions}
In this work, we introduce the first quantum model-free reinforcement learning (RL) algorithm with theoretical guarantees, addressing the challenges posed by large state and action spaces. To our knowledge, this is the first work to coherently embed the entire Natural Policy Gradient (NPG) into a quantum state by leveraging only the standard environment oracles from reinforcement learning. This novel construction allows subsequent quantum subroutines to utilize these NPG gradients directly in superposition, enabling potential accelerations in policy optimization. Extending classical model-free RL algorithms to the quantum domain is nontrivial. Specifically, classical model-free algorithms \citep{mondal2024improved, liu2020improved, agarwal2021theory} typically rely on sampling trajectories of random lengths following a geometric distribution for policy gradient estimation. However, encoding such trajectories with random lengths into quantum systems is challenging. To address this, we propose a novel deterministic sampling algorithm that uses truncation to estimate both the Fisher matrix and the policy gradient. We further develop a mini-batch strategy to incorporate quantum variance reduction into the Natural Policy Gradient (NPG) update. By carefully analyzing the bias introduced by the truncation, we show that our approach achieves a sample complexity of $\tilde{\mathcal{O}}(\epsilon^{-1.5})$, surpassing the classical lower bound of $\tilde{\mathcal{O}}(\epsilon^{-2})$. To the best of our knowledge, this is the first work to demonstrate quantum speedups for parameterized model-free infinite-horizon Markov Decision Processes (MDPs).

It should be noted that SGD-based approaches have been shown to achieve a sample complexity of $\tilde{\mathcal{O}}(\epsilon^{-1.5})$ for stochastic convex optimization \citep{sidford2024quantum}. However, reinforcement learning (RL) presents a fundamentally different set of challenges compared to stochastic convex optimization. While insights from the literature suggest that $\tilde{\mathcal{O}}(\epsilon^{-1.5})$ is likely the best achievable sample complexity for SGD-based approaches, this limitation arises because quantum computing can accelerate the inner loop but not the outer loop of such algorithms.  To achieve better sample complexity guarantees, stochastic cutting-plane methods have been proposed in \citep{sidford2024quantum}. Extending these methods to RL, however, would require an entirely new algorithmic framework. This is due to the non-convex nature of RL with general parameterization, which makes direct adaptation infeasible. Moreover, the techniques used in \citep{agarwal2021theory} to transition from local to global convergence are not applicable in this context, preventing similar guarantees for such methods in RL.

\subsection{Contributions}
The main contributions of this paper are summarized as follows:
\vspace{-2.5mm}
\begin{itemize}[leftmargin=*]
   \item We construct a quantum oracle for NPG gradient estimation by leveraging fundamental oracles from the Markov Decision Process (MDP), ensuring efficient computation in quantum environments.
   \item We modify the classical Natural Policy Gradient (NPG) algorithm into a deterministic setting, enabling seamless integration into quantum systems while maintaining bounded gradient estimation bias.
    \item Our proposed approach achieves a sample complexity of $\tilde{O}(\epsilon^{-1.5})$, which surpasses the classical lower bound of $\tilde{O}(\epsilon^{-2})$, demonstrating a significant improvement in efficiency. 
\end{itemize}
\subsection{Related Work and Preliminaries}
\textbf{Quantum Mean Estimation:} Mean estimation focuses on determining the average value of samples drawn from an unknown probability distribution. Notably, quantum mean estimation provides a quadratic improvement over classical methods \citep{montanaro2015quantum, hamoudi2021quantum}. This enhancement arises from the utilization of quantum amplitude amplification, which facilitates the suppression of undesired quantum states relative to the target states to be extracted \citep{brassard2002quantum}.

In what follows, we define the key concepts and results related to quantum mean estimation that are central to our analysis. Specifically, we introduce the definition of a classical random variable alongside the quantum sampling oracle, which is used to perform quantum experiments.

\begin{definition}[Random Variable, Definition 2.2 of \citep{cornelissen2022near}]
	A finite random variable can be represented as $X: \Omega \to E$ for some probability space $(\Omega, \mathbb{P})$, where $\Omega$ is a finite sample set, $\mathbb{P}:\Omega\to[0, 1]$ is a probability mass function and $E\subset \mathbb{R}$ is the support of $X$.  $(\Omega, \mathbb{P})$ is frequently omitted when referring to the random variable $X$.
\end{definition} 

{To perform quantum mean estimation, we provide the definition of quantum experiment. This is analogous to classical random experiments.
\begin{definition}[Quantum Experiment] Consider a random variable $X$ on a probability space $(\Omega, 2^\Omega, \mathbb{P})$. Let $\mathcal{H}_\Omega$ be a Hilbert space with basis states $\{\lvert\omega\rangle\}_{\omega\in\Omega}$ and fix a unitary $\mathcal{U}_{\mathbb{P}}$ acting on $\mathcal{H}_\Omega$ such that
\[
\mathcal{U}_{\mathbb{P}} : \lvert 0\rangle \mapsto \sum_{\omega\in\Omega} \sqrt{\mathbb{P}(\omega)}\lvert\omega\rangle
\]
assuming $0 \in \Omega$. We define a \emph{quantum experiment} as the process of applying the unitary $\mathcal{U}_{\mathbb{P}}$ or its inverse $\mathcal{U}_{\mathbb{P}}^{-1}$ on any state in $\mathcal{H}_\Omega$.
\end{definition}}

The unitary operator $\mathcal{U}_{\mathbb{P}}$ enables the query of samples of the random variable in superposition, forming the basis for the speed-ups achieved in quantum algorithms. To perform quantum mean estimation for random variable values \cite{cornelissen2022near, hamoudi2021quantum, montanaro2015quantum}, the implementation of an additional quantum evaluation oracle is required.

{
 \begin{definition}[Quantum Evaluation Oracle] \label{def: bin oracle}
	Consider a finite random variable $X : \Omega \to E$ on a probability space $(\Omega, 2^\Omega, \mathbb{P})$. Let $\mathcal{H}_\Omega$ and $\mathcal{H}_E$ be two Hilbert spaces with basis states $\{|\omega\rangle\}_{\omega \in \Omega}$ and $\{|x\rangle\}_{x \in E}$ respectively. We say that a unitary $\mathcal{U}_X$ acting on $\mathcal{H}_\Omega \otimes \mathcal{H}_E$ is a quantum evaluation oracle for $X$ if
\[
\mathcal{U}_X : |\omega\rangle|0\rangle \mapsto |\omega\rangle|X(\omega)\rangle
\]
for all $\omega \in \Omega$, assuming $0 \in E$.
\end{definition}}
Next, we present a key quantum mean estimation result that will be carefully employed in our algorithmic framework to utilize the quantum superpositioned states collected by the RL agent. One of the crucial aspects of Lemma \ref{lem: SubGau} is that quantum mean estimation converges at the rate $\mathcal{O}(\frac{1}{n})$ as opposed to classical benchmark convergence rate $\mathcal{O}(\frac{1}{\sqrt{n}})$ for $n$ number of samples, therefore estimation efficiency increases quadratically.   

\begin{lemma}[Quantum multivariate bounded estimator, Theorem 3.5 of \citep{cornelissen2022near}]
	\label{lem: SubGau}
Given access to the quantum sampling oracle $\mathcal{U}_X$ and let $d$ be the dimension of $X$, for any $\hat{\sigma},\delta\geq 0$ there is a procedure $\mathtt{QuantumMeanEstimation}\left(X,\hat{\sigma},\delta\right)$ that uses $\tilde{\cO}(L\sqrt{d}\log(1/\delta)/\hat{\sigma})$ queries and outputs an estimate $\hat{\mu}$ of the expectation $\mu$ of any $d$-dimensional random variable $X$ satisfying $\text{Var}[X]\leq L^2$ with error $\|\hat{\mu}-\mu\|\leq\hat{\sigma}\leq L$ and success probability $1-\delta$. Furthermore, if $d < L\sqrt{d}\log(1/\delta)/\hat{\sigma}$, $\mathtt{QuantumMeanEstimation}\left(X,\hat{\sigma},\delta\right)$ can be implemented using Algorithm 2 in \citep{cornelissen2022near}.
\end{lemma}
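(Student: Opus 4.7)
The plan is to build up from a scalar subroutine for bounded-variance random variables to a vector-valued estimator whose query complexity scales as $\sqrt{d}$ rather than $d$, and then to boost success probability by a median-of-means argument. The overall structure proceeds in three layers on top of the quantum experiment oracle $\mathcal{U}_{\mathbb{P}}$ and the quantum evaluation oracle $\mathcal{U}_X$ introduced earlier in the excerpt.

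For the \textbf{scalar bounded-variance subroutine}, I would first establish that for a one-dimensional random variable $Y$ with $\mathrm{Var}[Y]\le L^2$ one can obtain an $\hat{\sigma}$-accurate mean estimate using $\tilde{\cO}(L/\hat{\sigma})$ calls to $\mathcal{U}_Y$. The standard route is a Montanaro-style construction: split $Y$ into a ``head'' obtained by clipping at a threshold $T\approx L^2/\hat{\sigma}$ and a ``tail'' whose contribution to the mean is controlled by Chebyshev's inequality via the variance bound. On the clipped head, apply quantum amplitude estimation to the binary expansion of the value register produced by $\mathcal{U}_Y\circ\mathcal{U}_{\mathbb{P}}$; this outputs an $\hat{\sigma}$-accurate estimate of $\E[Y\,\mathbf{1}\{|Y|\le T\}]$ using $\tilde{\cO}(T/\hat{\sigma})$ oracle calls, while the tail correction is negligible.

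For the \textbf{lift to dimension $d$}, a naive coordinate-wise application of the scalar routine costs $\tilde{\cO}(L d/\hat{\sigma})$ queries, which is suboptimal by a factor of $\sqrt{d}$. To save this factor, I would run a vector-valued estimator based on random Gaussian sketches: draw $m=\cO(d)$ unit directions $g_1,\ldots,g_m$, estimate each scalar mean $\E[\la g_i, X\ra]$ via the subroutine above to per-direction accuracy $\hat{\sigma}/\sqrt{d}$, and recombine into a vector estimate using the sketch inverse. Because $\mathrm{Var}[\la g_i, X\ra]\le L^2$, each direction costs $\tilde{\cO}(L\sqrt{d}/\hat{\sigma})$ queries, and the recombination aggregates individual errors in $\ell_2$ rather than $\ell_\infty$, recovering the claimed $\sqrt{d}$ scaling. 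Finally, running $\cO(\log(1/\delta))$ independent copies and returning a geometric median converts the constant-probability estimator into one with failure probability at most $\delta$, contributing the logarithmic factor in the stated query bound.

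The \textbf{main obstacle} is obtaining the $\sqrt{d}$ rather than $d$ scaling: it requires a quantum estimator whose per-coordinate errors aggregate in $\ell_2$, and one must coordinate the amplitude-amplification passes so that the single oracle $\mathcal{U}_X$ is re-used coherently across the $m$ sketch directions without incurring a further $\sqrt{d}$ penalty from phase-estimation uncomputation. The scalar clipping analysis and median-boosting steps are comparatively routine, but the vector-valued variance accounting — ensuring that the per-direction rescaling composes correctly through amplitude estimation and the sketch inversion — is where the proof must be executed with care.
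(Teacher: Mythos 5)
First, note that the paper does not prove this lemma at all: it is imported verbatim as Theorem 3.5 of \citep{cornelissen2022near} and invoked as a black box, so there is no internal proof to compare against. Judged on its own terms, your sketch has a genuine gap at exactly the step you flag as the ``main obstacle,'' and the gap is one of approach rather than execution. Your reduction estimates $m=\cO(d)$ directional means $\E[\langle g_i, X\rangle]$, each to accuracy $\hat{\sigma}/\sqrt{d}$ via a scalar routine costing $\tilde{\cO}(L\sqrt{d}/\hat{\sigma})$ queries per direction; multiplying these gives a total of $\tilde{\cO}(Ld^{3/2}/\hat{\sigma})$, which is worse than the naive coordinate-wise bound $\tilde{\cO}(Ld/\hat{\sigma})$ you set out to beat, let alone the claimed $\tilde{\cO}(L\sqrt{d}/\hat{\sigma})$. (Accounting more carefully for the per-direction variance --- a random unit direction typically sees variance $\approx L^2/d$ rather than $L^2$ --- only brings you back to $\tilde{\cO}(Ld/\hat{\sigma})$.) No amount of $\ell_2$ aggregation repairs this, because amplitude-estimation errors from distinct runs are not independent zero-mean fluctuations that average down; each direction must individually be resolved to $\hat{\sigma}/\sqrt{d}$, and any scheme that reduces the problem to a collection of independent scalar mean estimations is exactly the kind of approach that cannot achieve the $\sqrt{d}$ saving.

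The actual argument in \citep{cornelissen2022near} is genuinely multivariate: after a classical rough-centering step, one builds a directional mean oracle that imprints the phase $e^{i\langle u, X\rangle}$ coherently, composes $n$ quantum experiments so that the accumulated phase is $n\langle u,\mu\rangle$ up to fluctuations controlled by $u^{\mathrm{T}}\Sigma u$, prepares a superposition over a $d$-dimensional grid of directions $u$, and applies an inverse quantum Fourier transform over that grid to read off all $d$ coordinates of $\mu$ simultaneously from the same $n$ samples. The $\sqrt{d}$ appears because the phase-fluctuation budget integrates $u^{\mathrm{T}}\Sigma u$ over the grid and therefore couples to $\mathrm{tr}(\Sigma)\leq L^2$ globally, rather than being paid separately per direction. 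Your scalar clipping subroutine and the median-based probability boosting are reasonable as stated, but the heart of the theorem is this multidimensional phase-estimation primitive, which your proposal does not supply.
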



\textbf{Quantum Reinforcement Learning:} In recent years, Quantum Reinforcement Learning (QRL) has garnered substantial interest from the research community \citep{dong2008quantum, paparo2014quantum, dunjko2017advances, jerbi2021quantum}. Within this field, studies on the Quantum Multi-Armed Bandits (Q-MAB) problem have demonstrated exponential reductions in regret by leveraging amplitude amplification techniques \citep{wang2021quantum, casale2020quantum}. However, the theoretical foundations of Q-MAB approaches are not directly applicable to QRL, primarily due to the lack of state transitions in bandit problems.

Although interest in QRL has been growing, most existing works are still limited to the tabular setup. For instance, \citep{zhongprovably, ganguly2023quantum} showed that logarithmic regret is achievable in episodic/average reward Markov Decision Processes (MDPs) using model-based approaches, while \cite{wang2021quantuma} investigated the infinite horizon  discounted MDP framework with tabular model-free setups. However, the results don't directly extend to large state spaces, where a parameterized policy is used. Further, the problem becomes non-convex due to the use of general parameterization. To the best of our knowledge, ours is the first work  to address infinite horizon MDPs with general parameterized policies.

\section{Formulation}\label{sec:formulation}

\subsection{Quantum Computing Basics:} In this section, we provide a concise overview of the key concepts relevant to our work, drawing from \citep{nielsen2010quantum} and \citep{wu2023quantum}. Let $\mathbb{C}^m$ denote an $m$-dimensional Hilbert space. A quantum state $|x\rangle = (x_1, \ldots, x_m)^T$ can be represented as a vector within this space, satisfying the normalization condition $\sum_i |x_i|^2 = 1$. For a finite set $\xi = \{\xi_1, \ldots, \xi_m\}$ of $m$ elements, each $\xi_i \in \xi$ can be mapped to an element of an orthonormal basis in $\mathbb{C}^m$ as follows:
\begin{equation}
    \xi_i \mapsto |\xi_i\rangle \equiv e_i,
\end{equation}
\noindent where $e_i$ represents the $i$th unit vector in $\mathbb{C}^m$. Utilizing this notation, any arbitrary quantum state $|x\rangle = (x_1, \ldots, x_m)^T$ can be expressed in terms of the elements of $\xi$ as:
\vspace{-1mm}
\begin{equation}
    |x\rangle = \sum_{n=1}^m x_n |\xi_n\rangle,
\end{equation}
\vspace{-1mm}
\noindent where $|x\rangle$ is a quantum superposition of the basis states $|\xi_1\rangle, \ldots, |\xi_m\rangle$, and $x_n$ denotes the amplitude corresponding to $|\xi_n\rangle$. To extract classical information from the quantum state, a measurement is performed, causing the state to collapse to one of the basis states $|\xi_i\rangle$ with probability $|x_i|^2$. In quantum computing, quantum states are typically represented using input or output registers composed of qubits, which may exist in superposition.

\subsection{MDP Formulation}

We analyze a Markov Decision Process (MDP) represented by the tuple $\mathcal{M}=(\mathcal{S}, \mathcal{A}, r, P, \gamma, \rho)$, where $\mathcal{S}$ and $\mathcal{A}$ denote the state space and action space, respectively. The reward function $r:\mathcal{S}\times \mathcal{A}\rightarrow [0, 1]$ assigns a reward to each state-action pair, and the state transition kernel $P:\mathcal{S}\times \mathcal{A}\rightarrow \Delta^{|\mathcal{S}|}$ defines the probabilities of transitioning between states (with $\Delta^{|\mathcal{S}|}$ denoting the probability simplex over $|\mathcal{S}|$ states). The discount factor $\gamma \in (0, 1)$ determines the importance of future rewards, while $\rho \in \Delta^{|\mathcal{S}|}$ specifies the initial state distribution. A policy $\pi:\mathcal{S}\rightarrow \Delta^{|\mathcal{A}|}$ provides the probability distribution over actions for a given state. The $Q$-function for a policy $\pi$, corresponding to a state-action pair $(s, a)$, is defined as follows:
\begin{equation}
    Q^{\pi}(s, a) \triangleq \E\left[\sum_{t=0}^{\infty} \gamma^t r(s_t, a_t)\bigg| s_0=s, a_0=a\right] \nonumber
\end{equation}
Here, the expectation is taken over all trajectories $\{(s_t, a_t)\}_{t=0}^{\infty}$ induced by $\pi$, where $s_t \sim P(s_{t-1}, a_{t-1})$ and $a_t \sim \pi(s_t)$ for all $t \geq 1$. Similarly, the $V$-function associated with the policy $\pi$ is defined as:
\begin{equation}
    V^{\pi}(s) \triangleq \E\left[\sum_{t=0}^{\infty} \gamma^t r(s_t, a_t)\bigg| s_0=s\right] = \sum_{a\in\mathcal{A}} \pi(a|s)Q^{\pi}(s, a) \nonumber
\end{equation}
The advantage function is then given by:
\begin{equation}
    A^{\pi}(s, a) \triangleq Q^{\pi}(s, a) - V^{\pi}(s), ~\forall (s, a)\in\mathcal{S}\times\mathcal{A} \nonumber
\end{equation}
The primary objective is to maximize the following function over all possible policies:
\begin{equation}
    J^{\pi}_{\rho} \triangleq \E_{s\sim \rho}\left[V^{\pi}(s)\right] = \dfrac{1}{1-\gamma}\sum_{s, a}d^{\pi}_{\rho}(s, a)\pi(a|s)r(s, a) \nonumber
\end{equation}
Here, the state occupancy $d^{\pi}_{\rho} \in \Delta^{|\mathcal{S}|}$ is defined as:
\begin{equation}
    d^{\pi}_{\rho}(s) \triangleq (1-\gamma)\sum_{t=0}^\infty \gamma^t \mathrm{Pr}(s_t=s|s_0\sim\rho, \pi),~\forall s\in\mathcal{S} \nonumber
\end{equation}
The state-action occupancy $\nu^{\pi}_\rho \in \Delta^{|\mathcal{S}\times \mathcal{A}|}$ is similarly defined as $\nu_\rho^{\pi}(s, a) \triangleq d^{\pi}_\rho(s)\pi(a|s)$ for all $(s, a) \in \mathcal{S} \times \mathcal{A}$.
In practical applications, the size of the state space often necessitates parameterizing policies using deep neural networks with $\mathrm{d}$-dimensional parameters. Let $\pi_{\theta}$ represent a policy parameterized by $\theta \in \mathbb{R}^{\mathrm{d}}$. Under this parameterization, the optimization problem can be reformulated as:
\begin{equation}
    \max_{\theta\in\mathbb{R}^{\mathrm{d}}}~J^{\pi_{\theta}}_{\rho} 
    \label{eq:max_J}
\end{equation}
For simplicity, we denote $J_\rho^{\pi_\theta}$ as $J_\rho(\theta)$ throughout the remainder of this paper.

\subsection{Quantum access to an MDP} \label{quantum_MDP}
We adopt the framework and notations of quantum computing from \cite{zhongprovably, wiedemann2022quantum, ganguly2023quantum, jerbi2023quantum} to design the quantum sampling oracle for reinforcement learning (RL) environments, enabling the modeling of an agent's interactions with an unknown MDP environment. We proceed to define quantum-accessible RL environments corresponding to the classical MDP $\mathcal{M}$. For an agent at step $t$ in state $s_t$ and performing action $a_t$, we construct quantum sampling oracles for the transition probabilities of the next state, $P(\cdot|s_t, a_t)$. To this end, let two Hilbert spaces, $\mathcal{\bar{S}} = \mathbb{C}^{|\mathcal{S}|}$ and $\mathcal{\bar{A}} = \mathbb{C}^{|\mathcal{A}|}$, represent the superpositions of classical states and actions, respectively. The computational bases for these Hilbert spaces are denoted as $\{|s\rangle\}_{s \in \mathcal{S}}$ and $\{|a\rangle\}_{a \in \mathcal{A}}$. We assume the ability to implement the following quantum sampling oracles:

\begin{itemize}[leftmargin=*]
    \item Quantum transition oracle $\mathcal{U}_P$ : {The quantum evaluation oracle for the transition probability (quantum transition oracle) $\mathcal{U}_P$ which at step $t$, returns the superposition over $s' \in \mathcal{S}$ according to $P(s' | s_t, a_t)$, the probability distribution of the next state given the current state $|s_t\rangle$ and action $|a_t\rangle$ is defined as:}
    \begin{equation} \label{eq:quantum_transition}
    \mathcal{U}_P : |s_t\rangle |a_t\rangle |0\rangle \rightarrow |s_t\rangle  |a_t\rangle \sum_{s' \in \mathcal{S}} \sqrt{P(s'|s_t,a_t)} |s'\rangle
    \end{equation}

    \item Quantum initial state oracle $\mathcal{U}_\rho$ : {The quantum sampling of the initial state distribution (quantum initial state oracle) $\mathcal{U}_\rho$ which when queried, returns a superposition over $s \in \mathcal{S}$ according to $\rho$ is defined as:}
    \begin{equation} \label{eq:quantum_initial}
    \mathcal{U}_\rho : |0\rangle \rightarrow \sum_{s \in \mathcal{S}} \sqrt{\rho(s)} |s\rangle
    \end{equation}
\end{itemize}
\noindent We also assume the ability to construct a unitary $\Pi$ that coherently implements a policy $\pi_\theta$:
\begin{itemize}[leftmargin=*]
    \item Let $\pi_\theta: \mathcal{S}\times\mathcal{A} \rightarrow [0,1]$ be a reinforcement learning policy acting in a state-action space $\mathcal{S}\times\mathcal{A}$ and parametrized by a vector $\theta \in \mathbb{R}^{d}$ (that can be encoded with finite precision as $|\theta \rangle$). We say that the policy is quantum-evaluatable if we can construct a unitary satisfying:
\begin{equation} \label{eq:quantum_policy}
	\Pi : |\theta \rangle |s \rangle |0 \rangle \mapsto |\theta \rangle |s \rangle \sum_{a \in \mathcal{A}} \sqrt{\pi_\theta(a|s)}|a \rangle.
\end{equation}
\end{itemize}
\noindent This construction is particularly intuitive for certain quantum policies, such as \textsc{raw-PQC} which is introduced in \textcolor{blue}{\cite{jerbi2023quantum}}.  However, any policy that can be classically computed can also be converted into such a unitary operation through quantum simulation of the classical computation of $(\pi_\theta(a|s) : a \in \mathcal{A})$ and leveraging established methods to encode this probability vector into the amplitudes of a quantum state \citep{grover02}. With access to quantum oracles for both the environment and the policy \textcolor{blue}{in \eqref{eq:quantum_transition}-\eqref{eq:quantum_policy}}, it becomes possible to design subroutines capable of generating superpositions of trajectories with fixed length within the environment and calculating the returns associated with these trajectories. We can further utilize $\mathcal{U}_P$, $\mathcal{U}_\rho$ and $\Pi$ defined above to construct a unitary $\mathcal{U}_{P(\tau_N)}$ as follows:
   Let $\mathcal{M}$ be a quantum-accessible MDP with oracles $\mathcal{U}_P,\mathcal{U}_\rho$ as above, and let $\pi_\theta$ be a quantum-evaluatable policy with its unitary implementation $\Pi$ as defined above.  Given a fixed number $N$, the unitary $\mathcal{U}_{P(\tau_N)}$ that prepares a coherent superposition of all trajectories $\tau_N = (s_0, a_0, \ldots, s_{N-1}, a_{N-1})$ of length $N$ (without their rewards) is defined as ,
\begin{equation} \label{quantum_traj}
	\mathcal{U}_{P(\tau_N)} : |\theta \rangle |0\rangle^{\otimes 2N} \mapsto |\theta \rangle \sum_{\tau_N} \sqrt{P_{\theta}(\tau_N)} |\tau_N \rangle
\end{equation}
for $P_{\theta}(\tau_N) = \rho(s_0)\prod_{t=0}^{N-1}\pi_\theta(a_t|s_t)P(s_{t+1}|s_t,a_t)$. 
The details of constructing $\mathcal{U}_{P(\tau_N)}$ using $\mathcal{U}_P$, $\mathcal{U}_\rho$ and $\Pi$ are characterized in Appendix \ref{quantam_oracle_appendix}.

\section{Proposed Algorithm}
\label{sec_method}

A common way to solve the maximization $(\ref{eq:max_J})$ in classical methods is via updating the policy parameters by applying the gradient ascent: $\theta_{k+1} = \theta_k + \eta \nabla_{\theta}J_{\rho}(\theta_k)$, $k\in\{0, 1, \cdots\}$ starting with an initial parameter, $\theta_0$. Here, $\eta>0$ denotes the learning rate, and the policy gradients (PG) are given as follows \citep{sutton1999policy}.
\begin{equation} \label{eq:pg_expression}
        \nabla_{\theta} J_{\rho}(\theta)
        =\dfrac{1}{1-\gamma} \E_{(s, a)\sim \nu_{\rho}^{\pi_\theta}}\big[A^{\pi_\theta}(s, a)\nabla_{\theta}\log \pi_{\theta}(a|s)\big]
\end{equation}

\noindent This paper uses the natural policy gradient (NPG) to update the policy parameters. In particular, $\forall k\in\{1, 2, \cdots\}$, we have,
\begin{align}
    \theta_{k+1} = \theta_k + \eta F_{\rho}(\theta_k)^{\dagger}\nabla_{\theta} J_{\rho}(\theta_k)
\end{align}
where $\dagger$ is the Moore-Penrose pseudoinverse operator, and $F_{\rho}$ is called the Fisher information matrix which is defined as follows $\forall \theta\in \mathbb{R}^{\mathrm{d}}$.
\begin{align}
\begin{split}
    F_{\rho}(\theta) \triangleq \E_{(s, a)\sim \nu^{\pi_\theta}_{\rho}}\big[
    \nabla_{\theta}\log \pi_{\theta}(a|s)\otimes\nabla_{\theta}\log \pi_{\theta}(a|s)\big]
\end{split} 
\label{eq:def_F_rho_theta}
\end{align}
where $\otimes$ denotes the outer product. In particular, for any $\mathbf{a}\in\mathbb{R}^{\mathrm{d}}$, $\mathbf{a}\otimes \mathbf{a} = \mathbf{aa}^{\mathrm{T}}$.

\begin{algorithm}[t]
    \caption{Quantum Natural Policy Gradient}
    \label{alg:QANPG}
    \begin{algorithmic}[1] 
        \STATE \textbf{Input:} Policy Parameter $\theta_0$, State distribution $\rho$,
        Run-time Parameters $K, H, N, \hat{\sigma}_g, \hat{\sigma}_F$, Learning Parameters $(\eta, \alpha)$, NPG initialization $\omega_0$
        \vspace{0.2cm}
        \FOR{$k\gets\{0, \cdots, K-1\}$}
        \vspace{0.2cm}
        \FOR{$h\in\{0, \cdots, H-1\}$}  
        \STATE \COMMENT{Mini-Batch Quantum SGD}
        \STATE $\tilde{g}_h\leftarrow  \texttt{QVarianceReduce}(\hat{g}_\rho(\tau_N | \theta_k), \hat{\sigma}^2_g)$ Obtain gradient estimator 
        \STATE  $\tilde{F}_h \leftarrow  \texttt{QVarianceReduce}(\hat{F}_\rho(\tau_N | \theta_k), \hat{\sigma}^2_F)$ Obtain Fisher estimator
        \STATE $\nabla_{\omega} \tilde{L}(\omega, \tau_N) \leftarrow  \tilde{F}_h \omega - \tilde{g}_h $
        \STATE NPG update: $\omega_{h+1} = \omega_h - \alpha \nabla_{\omega} \tilde{L}(\omega, \tau_N) \big|_{\omega = \omega_h} $ 
        \ENDFOR
        \STATE $\omega_k = \omega_H$
        \STATE Policy Parameter Update:
        \begin{align}
            \theta_{k+1}\gets \theta_k + \eta \omega_k
            \label{eq:policy_par_update}
        \end{align}
        \ENDFOR
        \STATE \textbf{Output:} $\{\theta_k\}_{k=0}^{K-1}$
    \end{algorithmic}
    \label{algo_npg}
\end{algorithm}

\noindent Let $\omega^*_{\theta}\triangleq F_{\rho}(\theta)^{\dagger}\nabla_{\theta} J_{\rho}(\theta)$. Notice that we have removed the dependence of $\omega^*_{\theta}$ on $\rho$ for notational convenience. Invoking $\eqref{eq:pg_expression}$, the term $\omega_{\theta}^*$ can be written as a solution of a quadratic optimization \citep{peters2008natural}. Specifically, $\omega_{\theta}^* = {\arg\min}_{\omega} L_{\nu^{\pi_\theta}_\rho}(\omega, \theta)$ where $L_{\nu^{\pi_\theta}_\rho}(\omega, \theta)$ is the compatible function approximation error and it is mathematically defined as,
\begin{equation}
\label{eq:def_app_error}
\begin{split}
    ~L_{\nu^{\pi_\theta}_\rho}(\omega, \theta)\triangleq \dfrac{1}{2}\E_{(s, a)\sim \nu^{\pi_\theta}_{\rho}}\bigg[\dfrac{1}{1-\gamma}A^{\pi_{\theta}}(s, a) \\
    -\omega^{\mathrm{T}}\nabla_{\theta}\log \pi_{\theta}(a|s)\bigg]^2
\end{split}
\end{equation}

\noindent Using the above notations, NPG updates can be written as $\theta_{k+1}=\theta_k+\eta \omega_k^*$, $k\in\{1, 2, \cdots\}$ where $\omega_k^*=\omega^*_{\theta_k}$. However, in most practical scenarios, the learner does not have knowledge of the state transition probabilities, making it difficult to directly calculate $\omega_{k}^*$. In the following, we clarify how one can get its sample-based estimates. 
Note that the gradient of $L_{\nu^{\pi_\theta}_\rho}(\cdot, \theta)$ can be obtained as shown below for any arbitrary $\theta$. \hspace{3cm}
\begin{align}
    \nabla_{\omega}L_{\nu^{\pi_\theta}_\rho}(\omega, \theta) = F_{\rho}(\theta)\omega -  \nabla_{\theta} J_{\rho}(\theta)
\end{align}
We now present an estimation procedure for obtaining quantum estimators for $ \nabla_{\omega}L_{\nu^{\pi_\theta}_\rho}(\omega, \theta)$ based on the provided oracles in \cref{quantum_MDP}. Different from the commonly used unbiased estimators in \citep{agarwal2021theory,mondal2024improved} which requires sampling from geometric distribution with mean $\frac{1}{1-\gamma}$, we construct a deterministic algorithm which can be implemented using the quantum oracles in \cref{quantum_MDP}. For a fixed $N$, let $\tau_N = (s_0,a_0, \ldots, s_N,a_N)$ be a trajectory with length $N$ generated by taking policy $\pi_\theta$ with $s_0$ sampled from $\rho$, we construct the policy gradient estimator $\hat{g}_\rho(\tau_N | \theta)$ and Fisher information matrix estimator $\hat{F}_\rho(\tau_N | \theta)$ as follows:
\begin{equation} \label{eq:g_hat}
    \hat{g}_\rho(\tau_N | \theta) = \sum_{n=0}^{N-1} \left( \sum_{t=0}^{n} \nabla_\theta \log \pi_\theta(a_t | s_t) \right) \left( \gamma^n r(s_n, a_n) \right),
\end{equation}
\begin{equation} \label{eq:F_hat}
\begin{split}
\hat{F}_{\rho}(\tau_N | \theta) = (1-\gamma) 
\sum_{n=0}^{N-1} \gamma^n 
\Bigl( \nabla_{\theta} \log \pi_{\theta}(a_n | s_n) \\
\quad \otimes \nabla_{\theta} \log \pi_{\theta}(a_n | s_n) \Bigr),
\end{split}
\end{equation}
\noindent Note that $\hat{g}_\rho(\tau_N | \theta)$ and  $\hat{F}_\rho(\tau_N | \theta)$ are truncated estimators deterministically obtained from $\tau_N$. Our goal then is to obtain the above estimators in quantum superpositioned forms from the quantum trajectories in \eqref{quantum_traj}. Assume that a quantum state $|\Psi_{\theta}\rangle = |\theta\rangle \sum_{\tau_N} \sqrt{P_{\theta}(\tau_N)} |\tau_N \rangle
$ represents a superposition of all possible values of \( \tau_N \) generated from the MDP under the parameterized policy $\pi_\theta$. We can construct quantum operators  $\mathcal{U}_{F}$ and $\mathcal{U}_{g}$ such that
\small
\begin{align} \label{eq:quantumF}
\mathcal{U}_{F} \left( |\Psi_{\theta}\rangle \otimes |0\rangle \right)& = |\psi_{F}^{\theta}\rangle  \notag \\
&  \coloneqq |\theta\rangle \sum_{\tau_N} 
\sqrt{P_{\theta}(\tau_N)} |\tau_N \rangle \otimes \left| \hat{F}_\rho(\tau_N | \theta) \right\rangle,
\end{align}
\vspace{-2mm}
\begin{align} \label{eq:quantumg}
\mathcal{U}_{g} \left( |\Psi_{\theta}\rangle \otimes |0\rangle \right) &= |\psi_{g}^{\theta}\rangle  \notag \\
&\coloneqq |\theta\rangle \sum_{\tau_N} 
\sqrt{P_{\theta}(\tau_N)} |\tau_N \rangle \otimes \left| \hat{g}_\rho(\tau_N | \theta) \right\rangle,
\end{align}
\normalsize
\noindent where \( |0\rangle \) is an auxiliary state, $\left| \hat{F}_\rho(\tau_N | \theta) \right\rangle$ and $\left| \hat{g}_\rho(\tau_N | \theta) \right\rangle$ represent the quantum states encoding the function value $\hat{F}_\rho(\tau_N | \theta)$ and $\hat{g}_\rho(\tau_N | \theta)$ for each basis state in the superposition \( | \tau_N \rangle \). The detailed construction for oracles $\mathcal{U}_{F}$ and $\mathcal{U}_{g}$ are described in Appendix \ref{quantam_oracle_appendix}.




Our quantum NPG algorithm is described in Algorithm \ref{algo_npg}, which can be segregated into a classical \textit{outer loop} and a novel quantum \textit{inner loop}. In its \textit{outer loop}, the policy parameters are updated $K$ number of times following $\eqref{eq:policy_par_update}$, where $\{\omega_k\}_{k=0}^{K-1}$ denote the estimates of $\{\omega_k^*\}_{k=0}^{K-1}$. For the \textit{inner loop}, the estimates $\{\omega_k\}_{k=0}^{K-1}$ are calculated by iterating the quantum SGD algorithm $H$ number of times. Each quantum SGD iteration consists of a quantum mini-batch approach, where the stochastic gradients are obtained via quantum mean estimation using quantum samples. We utilize \texttt{QVarianceReduce} in \cite{sidford2024quantum} that leverages $\mathtt{QuantumMeanEstimation}$ in Lemma \ref{lem: SubGau} as a subroutine. For each $\theta_k$ and each $h$, to perform \texttt{QVarianceReduce}, multiple quantum samples for $|\Psi_{\theta_k}\rangle$ are drawn to obtain variance-reduced estimators $\tilde{F}_h$ and $\tilde{g}_h$.

\begin{equation} \label{eq:gvariancereduce}
  \tilde{g}_h =  \texttt{QVarianceReduce}(\hat{g}_\rho(\tau_N | \theta), \hat{\sigma}^2_g)
\end{equation}
\begin{equation} \label{eq:Fvariancereduce}
  \tilde{F}_h =  \texttt{QVarianceReduce}(\hat{F}_\rho(\tau_N | \theta), \hat{\sigma}^2_F)
\end{equation}
\noindent Note that $\hat{\sigma}^2_F$ and $\hat{\sigma}^2_g$ are the pre-defined target variances, which are specified later in Theorem \ref{theorem_final}. \texttt{QVarianceReduce} offers a quadratic speedup in terms of sample complexity compared with classical mean estimation methods, and the details are specified in Appendix \ref{Algorithms_appendix}. Note that in this paper, for random variable $X$ with dimension $d$, we denote the variance of $X$ as the trace of its covariance matrix. With the above quantum variance-reduced estimators, the NPG gradient foor a given policy parameter $\theta$ is calculated and updated as follows:
\begin{equation} 
  \nabla_{\omega} \tilde{L}(\omega, \tau_N) =  \tilde{F}_h \omega - \tilde{g}_h
\end{equation}
\begin{equation} \label{eq:npg_update}
  \omega_{h+1} = \omega_h - \alpha \nabla_{\omega} \tilde{L}(\omega, \tau_N) \big|_{\omega = \omega_h}
\end{equation} 
\section{Global Convergence Analysis}
\label{sec_convergence}

In this section, we discuss the convergence properties of Algorithm \ref{alg:QANPG}. We first present the standard analysis procedures of the classical \textit{outer loop}, and then delve into the results of the quantum \textit{inner loop}, which contributes to the polynomial speedup in the final result.

\subsection{Outer Loop Analysis}
We start with the assumptions commonly used in classical parameterized RL.
\begin{assumption}\label{ass_score}
	The score function is $G$-Lipschitz and $B$-smooth. Mathematically, the following relations hold $\forall \theta, \theta_1,\theta_2 \in\mathbb{R}^{\mathrm{d}},\forall (s,a)\in\mathcal{S}\times\mathcal{A}$.
	\begin{equation}
		\begin{aligned}
			(a)~&\Vert \nabla_\theta\log\pi_\theta(a\vert s)\Vert\leq G\\
			(b)~&\Vert \nabla_\theta\log\pi_{\theta_1}(a\vert s)-\nabla_\theta\log\pi_{\theta_2}(a\vert s)\Vert\leq B\Vert \theta_1-\theta_2\Vert\quad
		\end{aligned}
	\end{equation}
 where $B$ and $G$ are some positive reals.
\end{assumption}
\noindent The Lipschitz continuity and smoothness of the score function are widely assumed in the literature \citep{liu2020improved, Alekh2020}, and can be validated for basic parameterized policies, such as Gaussian policies. Under Assumption \ref{ass_score}, the following lemma holds.
\begin{lemma}
\label{lemma_3}
    If Assumption \ref{ass_score} holds, then $J_{\rho}(\cdot)$ defined in $\eqref{eq:max_J}$ satisfies the following properties $\forall \theta\in\mathbb{R}^{\mathrm{d}}$.
    \begin{align*}
        &(a)~ \Vert \nabla_{\theta} J_\rho(\theta) \Vert \leq \dfrac{G}{(1-\gamma)^2}\\
        &(b)~ J_{\rho}(\cdot) \text{~is~}L-\text{smooth,~} L\triangleq\dfrac{B}{(1-\gamma)^2}+\dfrac{2G^2}{(1-\gamma)^3}
    \end{align*}
\end{lemma}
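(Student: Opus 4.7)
Part (a) is quick and follows from the policy gradient formula \eqref{eq:pg_expression} together with the reward-bound observation. Since $r \in [0,1]$, summing the geometric series gives $Q^{\pi_\theta}(s,a), V^{\pi_\theta}(s) \in [0, 1/(1-\gamma)]$ and therefore $|A^{\pi_\theta}(s,a)|\leq 1/(1-\gamma)$. Passing the norm inside the expectation via Jensen's inequality and invoking Assumption \ref{ass_score}(a) for $\|\nabla_\theta \log \pi_\theta(a|s)\|\leq G$ yields
\begin{equation*}
\|\nabla_\theta J_\rho(\theta)\| \leq \frac{1}{1-\gamma}\,\E_{(s,a)\sim \nu_\rho^{\pi_\theta}}\big[|A^{\pi_\theta}(s,a)|\,\|\nabla_\theta\log\pi_\theta(a|s)\|\big] \leq \frac{G}{(1-\gamma)^2}.
\end{equation*}

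For part (b), the plan is to control the operator norm of the Hessian by writing it as a trajectory-level expression and differentiating once more. Starting from the REINFORCE-with-causality identity
\begin{equation*}
\nabla_\theta J_\rho(\theta) = \E_\tau\Big[\sum_{t=0}^\infty \gamma^t G_t\, \nabla_\theta\log\pi_\theta(a_t|s_t)\Big], \qquad G_t \triangleq \sum_{t'\geq t}\gamma^{t'-t} r(s_{t'},a_{t'}),
\end{equation*}
and differentiating via $\nabla_\theta P_\theta(\tau) = P_\theta(\tau)\sum_u \nabla_\theta\log\pi_\theta(a_u|s_u)$ gives the decomposition $\nabla^2_\theta J_\rho(\theta) = T_1 + T_2$, where $T_1 = \E_\tau\big[\sum_t \gamma^t G_t\,\nabla^2_\theta\log\pi_\theta(a_t|s_t)\big]$ collects the contributions from the log-policy Hessian, and $T_2 = \E_\tau\big[\big(\sum_t \gamma^t G_t\,\nabla_\theta\log\pi_\theta(a_t|s_t)\big)\big(\sum_u \nabla_\theta\log\pi_\theta(a_u|s_u)\big)^{\!\top}\big]$ is the outer-product correction from differentiating $P_\theta(\tau)$. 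The first term is immediate: Assumption \ref{ass_score}(b) and $|G_t|\leq 1/(1-\gamma)$ give $\|T_1\|\leq B/(1-\gamma)^2$.

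The main obstacle will be bounding $T_2$, since the inner sum $\sum_u \nabla_\theta\log\pi_\theta(a_u|s_u)$ is not absolutely convergent per trajectory. I would exploit the baseline/martingale property $\E[\nabla_\theta\log\pi_\theta(a_u|s_u)\mid s_u] = \nabla_\theta\sum_a \pi_\theta(a|s_u) = 0$ to cancel cross terms. Expanding $T_2$ as a double sum $\sum_{t,u}\gamma^t \E[G_t\,\nabla_\theta\log\pi_\theta(a_t|s_t)\,\nabla_\theta\log\pi_\theta(a_u|s_u)^{\!\top}]$: for $u>t$, the split $G_t = G_t^{<u} + \gamma^{u-t}G_u$ together with the observation that the $G_t^{<u}$ piece is $\sigma(s_0,a_0,\ldots,s_u)$-measurable kills the $G_t^{<u}$ contribution by the baseline property, leaving a term of norm at most $\gamma^u G^2/(1-\gamma)$; for $u\leq t$, conditioning on $\sigma(s_0,a_0,\ldots,s_t)$ and again subtracting a $V^{\pi_\theta}(s_t)$-baseline reduces $\E[G_t\,\nabla_\theta\log\pi_\theta(a_t|s_t)\mid s_t]$ to $\E[A^{\pi_\theta}(s_t,a_t)\,\nabla_\theta\log\pi_\theta(a_t|s_t)\mid s_t]$, of norm at most $G/(1-\gamma)$, giving a contribution bounded by $\gamma^t G^2/(1-\gamma)$. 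Summing these three regimes over the causal triangle collapses to $\|T_2\|\leq G^2(1+\gamma)/(1-\gamma)^3\leq 2G^2/(1-\gamma)^3$, and combining with the $T_1$ bound yields the claimed smoothness constant $L = B/(1-\gamma)^2 + 2G^2/(1-\gamma)^3$.
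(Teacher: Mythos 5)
Your proposal is correct. For part (a) you do exactly what the paper does: combine the policy-gradient expression \eqref{eq:pg_expression} with $|A^{\pi_\theta}(s,a)|\leq 1/(1-\gamma)$ and Assumption~\ref{ass_score}(a). For part (b) the paper does not argue at all --- it simply cites Proposition~4.2 of \citep{xu2019sample} --- whereas you reconstruct the underlying Hessian bound from scratch. Your decomposition $\nabla^2_\theta J_\rho(\theta)=T_1+T_2$, the bound $\|T_1\|\leq B/(1-\gamma)^2$ from the log-policy Hessian (using that Assumption~\ref{ass_score}(b) gives $\|\nabla^2_\theta\log\pi_\theta\|\leq B$ wherever it exists), and the handling of $T_2$ via the score-function baseline identity $\E[\nabla_\theta\log\pi_\theta(a_u|s_u)\mid s_u]=0$ are exactly the ingredients of the cited result, and your bookkeeping over the causal triangle ($\gamma^u$ decay for $u>t$ after killing the $G_t^{<u}$ piece, $\gamma^t$ decay for $u\leq t$) correctly recovers $\|T_2\|\leq (1+\gamma)G^2/(1-\gamma)^3\leq 2G^2/(1-\gamma)^3$ and hence the exact constant $L$. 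Two small remarks: in the $u\leq t$ regime the baseline subtraction that replaces $Q^{\pi_\theta}$ by $A^{\pi_\theta}$ is not actually needed (and for $u=t$ it is not available, since the outer product does not have zero conditional mean), but the crude bound $|G_t|\leq 1/(1-\gamma)$ gives the same $\gamma^t G^2/(1-\gamma)$ contribution, so nothing breaks; and the interchange of differentiation with the infinite sum and expectation deserves a one-line justification (e.g., truncate at horizon $T$ and pass to the limit using the uniform geometric bounds), which is the standard technicality also implicit in the cited proposition. The benefit of your route is a self-contained argument; the cost is length relative to the paper's one-line citation.
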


\begin{proof}
    Statement $(a)$ can be proven by combining Assumption \ref{ass_score}(a) with $\eqref{eq:pg_expression}$ whereas statement $(b)$ follows directly from Proposition 4.2 of \citep{xu2019sample}.
\end{proof}
\noindent Lemma \ref{lemma_3} implies the Lipschitz continuity and smoothness for function $J_{\rho}(\cdot)$, which are critical properties for the further analysis. The following Assumption \ref{ass_epsilon_bias} states that the class of parameterized policies is sufficiently expressive such that, for any policy parameter $\theta$, the error in the transferred compatible function approximation, represented by the left-hand side of equation $(\ref{eq:eq_18})$, is bounded by a positive constant, $\epsilon_{\mathrm{bias}}$. 

\begin{assumption}
    \label{ass_epsilon_bias}
    The compatible function approximation error defined in $\eqref{eq:def_app_error}$ satisfies the following $\forall \theta\in\mathbb{R}^{\mathrm{d}}$.
    \begin{align}
    \label{eq:eq_18}
        L_{\nu^{\pi^*}_{\rho}}(\omega^*_{\theta}, \theta) \leq \epsilon_{\mathrm{bias}}
    \end{align}
    where $\pi^*$ is the optimal policy i.e., $\pi^* = \arg\max_{\pi} J^{\pi}_\rho$ and $\omega_{\theta}^*$ is defined as follows.
    \begin{align}
        \omega^*_{\theta} \triangleq {\arg\max}_{\omega\in\mathbb{R}^{\mathrm{d}}} L_{\nu^{\pi_\theta}_{\rho}}(\omega, \theta)
    \end{align}
\end{assumption}

\noindent It has been shown that for softmax parameterization, $\epsilon_{\mathrm{bias}}$ is equal to zero \citep{agarwal2021theory}. Moreover, in the case of expressive neural network-based parameterization, $\epsilon_{\mathrm{bias}}$ is observed to be relatively small \citep{wang2019neural}.

\begin{assumption}
\label{ass_4}
 There exists a constant $\mu_F>0$ such that $F_{\rho}(\theta)-\mu_F I_{\mathrm{d}}$ is positive semidefinite where $I_{\mathrm{d}}$ denotes an identity matrix of dimension $\mathrm{d}$ and $F_\rho(\theta)$ is defined in $\eqref{eq:def_F_rho_theta}$. Equivalently, $F_{\rho}(\theta)\succcurlyeq\mu_F I_{\mathrm{d}}$. 
\end{assumption}

 \noindent Assumption \ref{ass_4} asserts that the Fisher information function should not be too small, a condition commonly adopted in classical policy gradient literature \citep{liu2020improved, bai2023achieving, zhang2020global}. This assumption holds for Gaussian policies with linearly parameterized means. The set of policies that satisfy Assumption \ref{ass_4} is referred to as the Fisher Non-degeneracy (FND) set. It is important to note that $F_\rho(\theta) = \nabla^2_{\omega} L_{\nu^{\pi_\theta}_\rho}(\omega, \theta)$. Therefore, Assumption \ref{ass_4} essentially implies that the function $L_{\nu^{\pi_\theta}_\rho}(\cdot, \theta)$ exhibits strong convexity with parameter $\mu_F$. With the above assumption, the following lemma holds:

\begin{lemma}[Corollary 1 of \citep{mondal2024improved}]
    \label{lemma:local_global}
    Let, the parameters $\{\theta_k\}_{k=0}^{K-1}$ be updated via $\eqref{eq:policy_par_update}$, $\pi^*$ be the optimal policy and $J_{\rho}^*$ be the optimal value of the function $J_{\rho}(\cdot)$. If assumptions \ref{ass_score}-\ref{ass_4} hold, then the following inequality is satisfied for $\eta=\frac{\mu_F^2}{4G^2 L}$.
    \begin{equation}\label{eq:general_bound_corr}
		\begin{split}
			&J_{\rho}^{*}-\frac{1}{K}\sum_{k=0}^{K-1}\E[J_{\rho}(\theta_k)]\\
            &\leq \sqrt{\epsilon_{\mathrm{bias}}}
            +\frac{G}{K}\sum_{k=0}^{K-1}\E\Vert(\E\left[\omega_k|\theta_k\right]-\omega^*_k)\Vert\\
			&+\dfrac{B}{4L}\left(\dfrac{\mu_F^2}{G^2}+G^2\right)\left(\dfrac{1}{K}\sum_{k=0}^{K-1}\E\Vert\omega_k-\omega_k^*\Vert^2\right)\\
            &+\dfrac{G^2}{\mu_F^2K}\left(\dfrac{B}{1-\gamma}+4L\E_{s\sim d_\rho^{\pi^*}}[KL(\pi^*(\cdot\vert s)\Vert\pi_{\theta_0}(\cdot\vert s))]\right)
        \end{split}
	\end{equation}
  where $KL(\cdot||\cdot)$ is the Kullback-Leibler divergence.
\end{lemma}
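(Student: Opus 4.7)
The plan is to follow the standard Natural Policy Gradient analysis in the style of Agarwal et al.\ and its refinements, adapted to handle the two separate error channels (bias of the inner-loop estimator and its mean-square error) that appear on the right-hand side. First I would invoke the Performance Difference Lemma, writing
\begin{equation*}
J_\rho^* - J_\rho(\theta_k) \;=\; \frac{1}{1-\gamma}\,\E_{s\sim d_\rho^{\pi^*}}\E_{a\sim\pi^*(\cdot|s)}\bigl[A^{\pi_{\theta_k}}(s,a)\bigr].
\end{equation*}
Using Assumption \ref{ass_epsilon_bias}, the advantage can be replaced by $(1-\gamma)(\omega_k^*)^{\mathrm T}\nabla_\theta\log\pi_{\theta_k}(a|s)$ up to a residual whose squared $\nu^{\pi^*}_\rho$-expectation is bounded by $\epsilon_{\mathrm{bias}}$. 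Applying Cauchy--Schwarz (really Jensen on the square root) converts this into an additive $\sqrt{\epsilon_{\mathrm{bias}}}$ error, which is what ultimately produces the first term on the RHS.

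Next I would exploit the classical KL-divergence potential argument. Because $\pi_\theta$ has a $B$-smooth score function (Assumption \ref{ass_score}(b)), a second-order Taylor expansion of $\log\pi_\theta(a|s)$ around $\theta_k$ yields
\begin{equation*}
KL(\pi^*\|\pi_{\theta_{k+1}}) - KL(\pi^*\|\pi_{\theta_k}) \leq -\eta\,\E_{a\sim\pi^*}\!\bigl[\omega_k^{\mathrm T}\nabla_\theta\log\pi_{\theta_k}(a|s)\bigr] + \tfrac{B\eta^2}{2}\Vert\omega_k\Vert^2.
\end{equation*}
The linear term on the right is then split as $\omega_k = \omega_k^* + (\omega_k-\omega_k^*)$: the piece involving $\omega_k^*$ ties back to the advantage via the previous paragraph, and the remainder $(\omega_k-\omega_k^*)^{\mathrm T}\nabla\log\pi_{\theta_k}$ contributes an $O(G\Vert\omega_k-\omega_k^*\Vert)$ error. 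Crucially, the linear error appears \emph{inside} a conditional expectation over the randomness of $\omega_k$ given $\theta_k$, so applying the tower property isolates $\E[\omega_k|\theta_k]-\omega_k^*$, producing the $\frac{G}{K}\sum\E\Vert\E[\omega_k|\theta_k]-\omega_k^*\Vert$ term. Telescoping the KL divergences over $k=0,\dots,K-1$ and dividing by $K$ leaves the initial-KL term $\frac{1}{\eta K}\E_{s\sim d_\rho^{\pi^*}}[KL(\pi^*\|\pi_{\theta_0})]$.

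In parallel I would use the $L$-smoothness of $J_\rho$ from Lemma \ref{lemma_3}(b) to get an ascent inequality $J_\rho(\theta_{k+1}) \geq J_\rho(\theta_k) + \eta\langle\nabla J_\rho(\theta_k),\omega_k\rangle - \tfrac{L\eta^2}{2}\Vert\omega_k\Vert^2$, which when combined with Assumption \ref{ass_4} ($F_\rho\succeq\mu_F I$) controls the quadratic $\Vert\omega_k\Vert^2$ and $\Vert\omega_k-\omega_k^*\Vert^2$ penalties. Summing the KL-telescope and ascent bounds, substituting $\eta=\mu_F^2/(4G^2 L)$ to balance the first-order progress against the $\eta^2$-smoothness penalties, and regrouping the constants yields exactly the four terms on the RHS of \eqref{eq:general_bound_corr}: the $\sqrt{\epsilon_{\mathrm{bias}}}$ approximation error, the linear $\omega$-bias term, the quadratic $\omega$-MSE term with coefficient $\frac{B}{4L}(\mu_F^2/G^2+G^2)$, and the $\frac{G^2}{\mu_F^2 K}$ initialization term.

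The main obstacle is cleanly separating the linear-in-$\omega_k$ contributions from the quadratic-in-$\omega_k$ contributions, because the stated bound treats $\E\Vert\E[\omega_k|\theta_k]-\omega_k^*\Vert$ and $\E\Vert\omega_k-\omega_k^*\Vert^2$ as two distinct quantities; a naive Cauchy--Schwarz would conflate them and lose the $\sqrt{\epsilon^{-1.5}}$-type savings that the outer loop ultimately needs. The trick is to apply the conditional tower property to linear terms \emph{before} bounding them (so the mean-zero variance of $\omega_k$ drops out) while the quadratic terms absorb the full second moment; this only works because the $B$-smoothness error in the KL step and the $L$-smoothness error in the ascent step are genuinely quadratic in $\omega_k$, whereas the KL first-order term is genuinely linear. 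Carefully tracking these dependencies, together with the choice of $\eta$ that zeroes out certain cross terms, is what makes the argument go through, and we can then cite Corollary~1 of \citep{mondal2024improved} for the precise constants.
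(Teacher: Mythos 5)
The paper offers no proof of this lemma at all: it imports the statement verbatim as Corollary~1 of \citep{mondal2024improved}, whose argument is exactly the one you outline (performance-difference lemma plus the $\sqrt{\epsilon_{\mathrm{bias}}}$ transfer via Assumption~\ref{ass_epsilon_bias}, KL-potential telescoping using the $B$-smooth score, the smoothness ascent step with $\eta=\mu_F^2/(4G^2L)$, and crucially the tower-property trick that keeps the linear term as $\E\Vert\E[\omega_k|\theta_k]-\omega_k^*\Vert$ while only the genuinely quadratic penalties absorb $\E\Vert\omega_k-\omega_k^*\Vert^2$). Your sketch is a faithful reconstruction of that cited proof, so it is correct and takes essentially the same approach as the source the paper relies on.
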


Lemma \ref{lemma:local_global} decomposes the optimality gap in a way that achieves the order-optimal sample complexity of $\mathcal{O}(\epsilon^{-2})$ in the classical setting.
Notice that  both the first-order approximation error, $\E\Vert(\E[\omega_k|\theta]-\omega_k^*)\Vert$ and the second-order error, $\E\Vert\omega_k-\omega_k^*\Vert^2$ are both dependent on $H$, the number of iterations in the \textit{inner loop}.

\subsection{Inner Loop Analysis}

The sample complexity of the quantum \textit{inner loop} is characterized by the total quantum state-action pairs needed in the algorithm, which is equivalent to the total queries of $\mathcal{U}_P$ and $\Pi$.  Recall that each iteration in the \textit{inner loop} consists of one operation of \eqref{eq:gvariancereduce} and \eqref{eq:Fvariancereduce}, which performs quantum mean estimations of quantum samples $|\psi_{F}^{\theta}\rangle$ and $|\psi_{g}^{\theta}\rangle$ queried from $\mathcal{U}_F$ and $\mathcal{U}_g$. The complexities of these quantum samples are as follows:
\begin{lemma} \label{quantum_sample}
One call of $\mathcal{U}_F$ and $\mathcal{U}_g$ each requires one call of $\mathcal{U}_\rho$ and $\mathcal{O}(N)$ calls to $\mathcal{U}_P$ and $\Pi$.
\end{lemma}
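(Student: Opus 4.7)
The plan is to reduce the construction of $\mathcal{U}_F$ and $\mathcal{U}_g$ to two stages: first, prepare the trajectory superposition $\mathcal{U}_{P(\tau_N)}$ of \eqref{quantum_traj} using the environment oracles; second, coherently write the deterministic functions $\hat{F}_\rho(\tau_N|\theta)$ and $\hat{g}_\rho(\tau_N|\theta)$ defined in \eqref{eq:g_hat}--\eqref{eq:F_hat} into an auxiliary register using only reversible classical arithmetic, i.e.\ no further oracle queries. The target query count then follows by simply tallying up the calls made in the first stage.

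For the first stage, I would start from $|\theta\rangle|0\rangle^{\otimes 2N}$ and apply $\mathcal{U}_\rho$ once on the first state register to load the initial-state superposition $\sum_{s_0}\sqrt{\rho(s_0)}|s_0\rangle$. I would then iterate, for $t=0,\dots,N-1$: apply $\Pi$ controlled on $(|\theta\rangle,|s_t\rangle)$ to produce $\sum_{a_t}\sqrt{\pi_\theta(a_t|s_t)}|a_t\rangle$ in the $t$-th action register, followed by $\mathcal{U}_P$ controlled on $(|s_t\rangle,|a_t\rangle)$ to produce $\sum_{s_{t+1}}\sqrt{P(s_{t+1}|s_t,a_t)}|s_{t+1}\rangle$ in the $(t{+}1)$-th state register. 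By induction on $t$, after $N$ rounds the joint amplitude of a computational basis state $|\tau_N\rangle$ equals $\sqrt{\rho(s_0)\prod_{t=0}^{N-1}\pi_\theta(a_t|s_t)P(s_{t+1}|s_t,a_t)}=\sqrt{P_\theta(\tau_N)}$, which is exactly the state in \eqref{quantum_traj}. The cost is precisely $1$ call to $\mathcal{U}_\rho$, $N$ calls to $\Pi$, and $N$ calls to $\mathcal{U}_P$.

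For the second stage, I would note that $\hat{g}_\rho(\tau_N|\theta)$ and $\hat{F}_\rho(\tau_N|\theta)$ are \emph{explicit} closed-form functions of the pair $(\theta,\tau_N)$: they involve only the reward $r(s_n,a_n)$, powers of $\gamma$, and the score $\nabla_\theta\log\pi_\theta(a_t|s_t)$, each of which is a classically computable primitive that does not involve the unknown dynamics $P$ or the initial distribution $\rho$. Standard reversible-computation arguments let me compile a circuit $(\theta,\tau_N,0)\mapsto(\theta,\tau_N,\hat{g}_\rho(\tau_N|\theta))$ (and analogously for $\hat{F}_\rho$), and lifting it to a unitary and composing with $\mathcal{U}_{P(\tau_N)}$ yields exactly the operators $\mathcal{U}_g$ and $\mathcal{U}_F$ in \eqref{eq:quantumF}--\eqref{eq:quantumg}. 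Because this second stage queries none of $\mathcal{U}_\rho$, $\mathcal{U}_P$, $\Pi$, the oracle count from stage one is preserved, giving the claimed bound.

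The main obstacle (and the only non-bookkeeping point) is to justify that computing the score $\nabla_\theta\log\pi_\theta(a_t|s_t)$ coherently requires no additional invocations of $\Pi$. This is true because $\Pi$ was already assumed to arise from a classical description of $\pi_\theta$ (via quantum simulation of classical computation of $(\pi_\theta(a|s):a\in\mathcal{A})$, as discussed after \eqref{eq:quantum_policy}); hence its logarithmic gradient is likewise classically computable and can be evaluated inside the coherent arithmetic circuit using only standard gates on the $(\theta,\tau_N)$ register, without re-querying the policy amplitude-loading unitary $\Pi$. Consequently, the total oracle cost of one call to either $\mathcal{U}_F$ or $\mathcal{U}_g$ is exactly $1$ call to $\mathcal{U}_\rho$ and $\mathcal{O}(N)$ calls each to $\mathcal{U}_P$ and $\Pi$, as claimed.
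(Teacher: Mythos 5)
Your proposal is correct and follows essentially the same route as the paper's own proof: build $\mathcal{U}_{P(\tau_N)}$ with one call to $\mathcal{U}_\rho$ and $N$ interleaved calls each to $\Pi$ and $\mathcal{U}_P$, then append the estimator values via coherent reversible evaluation of the classically computable functions $\hat{g}_\rho$ and $\hat{F}_\rho$, which costs no further oracle queries. Your added remark that evaluating the score $\nabla_\theta\log\pi_\theta$ needs no extra invocations of $\Pi$ is a point the paper states only implicitly, but it is consistent with the paper's argument.
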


\noindent The proof of \cref{quantum_sample} is in Appendix \ref{quantam_oracle_appendix}.   \cref{quantum_sample} states that obtaining a single sample of $|\psi_{F}^{\theta}\rangle$ in \eqref{eq:quantumF} and $|\psi_{g}^{\theta}\rangle$ in \eqref{eq:quantumg} requires a sample complexity of $\mathcal{O}(N)$ with respect to the quantum state-action pairs.  We next provide the biases and variances for $\hat{g}_\rho(\tau_N | \theta)$ and $\hat{F}_\rho(\tau_N | \theta)$, which are the random variables for the quantum samples $|\psi_{F}^{\theta}\rangle$ and $|\psi_{g}^{\theta}\rangle$.

\begin{theorem}\label{Fg_bound}
\small
    Under assumption \ref{ass_score}, for a fixed $\theta$, we have the following results on $\hat{F}_\rho(\tau_N | \theta)$ and $\hat{g}_\rho(\tau_N | \theta)$.
    \begin{equation} \label{eq:gexpectation}
        || \E[\hat{g}_\rho(\tau_N | \theta)] - \nabla_{\theta} J_{\rho}(\theta) || \leq \hat{\delta}_g
    \end{equation}
    \begin{equation} \label{eq:gnormbound}
        \E  || \hat{g}_\rho(\tau_N | \theta) - \nabla_{\theta} J_{\rho}(\theta) || \leq \frac{\sqrt{d}G}{(1-\gamma)^2}+ \hat{\delta}_g
    \end{equation}
    \begin{equation} \label{eq:Fexpectation}
        || \E[\hat{F}_\rho(\tau_N | \theta)] -  F_{\rho}(\theta) || \leq\hat{\delta}_F
    \end{equation}
    \begin{equation} \label{eq:Fnormbound}
        \E  || \hat{F}_\rho(\tau_N | \theta) - F_{\rho}(\theta) || \leq \sqrt{d}G^2+  \hat{\delta}_F
    \end{equation}
    \begin{equation} \label{eq:Fgvariance}
       \emph{Var}(\hat{g}_\rho(\tau_N | \theta)) \leq \frac{dG^2}{(1-\gamma)^4} \; \, \text{and} \; \, \emph{Var}(\hat{F}_\rho(\tau_N | \theta)) \leq dG^4
    \end{equation}
\normalsize
Where $\hat{\delta}_g = G( \frac{N+1}{1-\gamma} + \frac{\gamma}{(1-\gamma)^2}) \gamma ^{N}$ and $\hat{\delta}_F = G^2 \gamma^{N}$
\end{theorem}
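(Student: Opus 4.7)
The plan is to view both truncated estimators as partial sums of infinite series whose expectations coincide with the true targets $\nabla_{\theta}J_{\rho}(\theta)$ and $F_{\rho}(\theta)$, and then to bound (i) the dropped tail (for the bias), (ii) an almost-sure norm bound on the estimator (for the variance), and (iii) combine (i) and (ii) via Jensen for the full mean-deviation bounds.

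First I would establish the two exact series representations. By applying the log-derivative trick to the trajectory distribution $P_{\theta}(\tau)$, the policy gradient theorem gives
\begin{equation}
\nabla_{\theta}J_{\rho}(\theta) = \E_{\tau}\Bigl[\sum_{n=0}^{\infty}\gamma^{n}r(s_{n},a_{n})\sum_{t=0}^{n}\nabla_{\theta}\log\pi_{\theta}(a_{t}|s_{t})\Bigr],
\end{equation}
so that $\hat{g}_{\rho}(\tau_{N}|\theta)$ is exactly the truncation of the outer sum at $n=N-1$. Unrolling $d_{\rho}^{\pi_{\theta}}$ as a geometric mixture of per-step state distributions similarly yields
\begin{equation}
F_{\rho}(\theta) = (1-\gamma)\sum_{n=0}^{\infty}\gamma^{n}\,\E_{\tau}\bigl[\nabla_{\theta}\log\pi_{\theta}(a_{n}|s_{n})\otimes\nabla_{\theta}\log\pi_{\theta}(a_{n}|s_{n})\bigr],
\end{equation}
and $\hat{F}_{\rho}(\tau_{N}|\theta)$ is its truncation at the same index. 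The bias bounds \eqref{eq:gexpectation} and \eqref{eq:Fexpectation} then reduce to bounding the dropped tails. For the gradient, each dropped summand has norm at most $(n+1)G\gamma^{n}$ by Assumption \ref{ass_score}(a) and $r\in[0,1]$; reindexing $m=n-N$ and splitting $\sum_{m\geq 0}m\gamma^{m}=\gamma/(1-\gamma)^{2}$ from $(N+1)\sum_{m\geq 0}\gamma^{m}=(N+1)/(1-\gamma)$ produces exactly $\hat{\delta}_{g}=G\gamma^{N}\!\bigl[(N+1)/(1-\gamma)+\gamma/(1-\gamma)^{2}\bigr]$. The Fisher tail is simpler: each term has operator norm $\leq G^{2}$, so the tail contributes at most $(1-\gamma)G^{2}\sum_{n=N}^{\infty}\gamma^{n}=G^{2}\gamma^{N}=\hat{\delta}_{F}$.

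For the variance bounds in \eqref{eq:Fgvariance} I would derive deterministic $\ell_{2}$ bounds on the samples themselves. Triangle inequality combined with $\Vert\nabla_{\theta}\log\pi_{\theta}\Vert\leq G$ yields $\Vert\hat{g}_{\rho}(\tau_{N}|\theta)\Vert\leq G\sum_{n=0}^{N-1}(n+1)\gamma^{n}\leq G/(1-\gamma)^{2}$ and $\Vert\hat{F}_{\rho}(\tau_{N}|\theta)\Vert_{F}\leq(1-\gamma)G^{2}\sum_{n=0}^{N-1}\gamma^{n}\leq G^{2}$. Under the paper's convention that $\mathrm{Var}(X)$ is the trace of the covariance, I have $\mathrm{Var}(X)\leq\E\Vert X-\E X\Vert^{2}\leq\E\Vert X\Vert^{2}$; the factor $d$ enters when passing between the operator/Frobenius norm and the coordinate-wise variance summed over $d$ components, giving the stated $dG^{2}/(1-\gamma)^{4}$ and $dG^{4}$.

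Finally, the deviation bounds \eqref{eq:gnormbound} and \eqref{eq:Fnormbound} follow from the decomposition
\begin{equation}
\E\Vert\hat{g}_{\rho}(\tau_{N}|\theta)-\nabla_{\theta}J_{\rho}(\theta)\Vert\leq\E\Vert\hat{g}_{\rho}(\tau_{N}|\theta)-\E\hat{g}_{\rho}(\tau_{N}|\theta)\Vert+\Vert\E\hat{g}_{\rho}(\tau_{N}|\theta)-\nabla_{\theta}J_{\rho}(\theta)\Vert,
\end{equation}
where the first term is bounded by $\sqrt{\mathrm{Var}(\hat{g}_{\rho}(\tau_{N}|\theta))}\leq\sqrt{d}\,G/(1-\gamma)^{2}$ via Jensen, and the second is $\hat{\delta}_{g}$ from the tail bound; the same argument handles $\hat{F}_{\rho}$. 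The bookkeeping I expect to be the main obstacle is the $(N+1)$-weighted geometric tail for $\hat{\delta}_{g}$, which must be split carefully as above so that the two pieces $(N+1)/(1-\gamma)$ and $\gamma/(1-\gamma)^{2}$ both appear; I would also double-check the $\sqrt{d}$ factor against the paper's trace-of-covariance convention to make sure no hidden $d$ vs.\ $d^{2}$ mismatch occurs when treating $\hat{F}_{\rho}$ as a vectorized $d^{2}$-dimensional object.
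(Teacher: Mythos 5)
Your proposal follows essentially the same route as the paper's proof: introduce the infinite-horizon estimators as unbiased intermediaries, bound the bias by the dropped geometric tail (splitting the $(n+1)\gamma^n$ tail into the $(N+1)/(1-\gamma)$ and $\gamma/(1-\gamma)^2$ pieces exactly as the paper does), bound the variance via the deterministic norm bounds $\Vert\hat{g}\Vert\leq G/(1-\gamma)^2$ and $\Vert\hat{F}\Vert\leq G^2$, and combine via triangle inequality plus Jensen for the mean-deviation bounds. The argument is correct and matches the paper's, including the slightly informal handling of the dimension factor $d$ in the variance bounds that you rightly flag.
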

\noindent While classical algorithms produce unbiased estimators of $\nabla_{\theta} J_{\rho}(\theta)$ and $F_\rho(\theta)$ via sampling from geometric distributions, our deterministic approach in \eqref{eq:g_hat} and \eqref{eq:F_hat} utilizes truncation to allow integration into quantum systems. The proof of Theorem \ref{Fg_bound} (in Appendix \ref{appendix_Fgbound}) 
analyzes the truncation-based estimators by comparing them to their following infinite-horizon counterpart as an intermediate step:
\small
\begin{equation}
    g(\tau | \theta) = \sum_{n=0}^{\infty} \left( \sum_{t=0}^{n} \nabla_\theta \log \pi_\theta(a_t | s_t) \right) \left( \gamma^n r(s_n, a_n) \right) \nonumber
\end{equation}
\begin{equation}
    F(\tau | \theta) = (1-\gamma)\sum_{n=0}^{\infty} \bigg( \gamma^n 
 \nabla_\theta \log \pi_\theta(a_n | s_n) \otimes \log \pi_\theta(a_n | s_n) \bigg) \nonumber
\end{equation}
\normalsize
Analyzing the error bounds $|| \E[\hat{g}_\rho(\tau_N | \theta)] - \E[ g(\tau | \theta)] ||$ and $|| \E[\hat{F}_\rho(\tau_N | \theta)] - \E[F(\tau | \theta)] ||$ quantifies the truncation error, showing that the truncation introduces an exponentially decaying bias of order $\cO(\gamma^N)$ while maintaining a bounded variance. This is achieved by leveraging properties of the geometric series and analyzing the dependence on the discount factor $\gamma$. Given the above, we can formally state the bias and variance bounds for the gradient estimator $\tilde{g}_h$ and Fisher estimator $\tilde{F}_h$ for each $h$, along with expected number of quantum samples $|\psi_{F}^{\theta}\rangle$ and $|\psi_{g}^{\theta}\rangle$ used in \eqref{eq:gvariancereduce} and \eqref{eq:Fvariancereduce}.

\begin{theorem} \label{thm:quantum_sample_complexity}
    For any $\theta$ and h, \eqref{eq:gvariancereduce} and \eqref{eq:Fvariancereduce} satisfy $\E[\tilde{g}_h] = \E[\hat{g}_\rho(\tau_N | \theta)]$ and $\E[\tilde{F}_h] = \E[\hat{F}_\rho(\tau_N | \theta)]$ with reduced variance $\emph{Var}(\tilde{g}_h) = \hat{\sigma}^2_g$ and $\emph{Var}(\tilde{F}_h) = \hat{\sigma}^2_F$. Moreover, \eqref{eq:gvariancereduce} and \eqref{eq:Fvariancereduce} perform $\Tilde{\mathcal{O}}\big(\frac{G^2 d}{\hat{\sigma}_F}\big)$ queries of $\mathcal{U}_{F}$ and $\Tilde{\mathcal{O}}\big(\frac{G d}{(1-\gamma)^2 \hat{\sigma}_g}\big)$ queries of $\mathcal{U}_{g}$ in expectations.
\end{theorem}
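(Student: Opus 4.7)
The plan is to invoke the \texttt{QVarianceReduce} procedure of \citep{sidford2024quantum} as a black box and combine it with the moment bounds already supplied by Theorem \ref{Fg_bound}. First I would verify the preconditions: by \eqref{eq:Fgvariance}, the input random variables $\hat{g}_\rho(\tau_N\mid\theta)$ and $\hat{F}_\rho(\tau_N\mid\theta)$ are $d$-dimensional with variances (trace of covariance) bounded by $dG^2/(1-\gamma)^4$ and $dG^4$ respectively. Re-expressing these as the second-moment constants that feed into Lemma \ref{lem: SubGau} gives effective norm bounds $L_g = \sqrt{d}\,G/(1-\gamma)^2$ and $L_F = \sqrt{d}\,G^2$.

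Next I would appeal to the structural guarantee of \texttt{QVarianceReduce}: by construction, it takes a quantum-accessible random variable $X$ together with a target variance $\hat{\sigma}^2$ and returns an estimator whose expectation agrees exactly with $\E[X]$ and whose variance is exactly $\hat{\sigma}^2$. Instantiating this twice, once with $X = \hat{g}_\rho(\tau_N\mid\theta)$ and target $\hat{\sigma}_g^2$, and once with $X = \hat{F}_\rho(\tau_N\mid\theta)$ and target $\hat{\sigma}_F^2$, immediately yields the two distributional claims $\E[\tilde{g}_h] = \E[\hat{g}_\rho(\tau_N\mid\theta)]$, $\E[\tilde{F}_h] = \E[\hat{F}_\rho(\tau_N\mid\theta)]$ and the matching variance identities.

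For the expected query count, I would substitute $L_g$ and $L_F$ into the complexity bound of Lemma \ref{lem: SubGau}, which charges $\tilde{\cO}(L\sqrt{d}\log(1/\delta)/\hat{\sigma})$ queries per call. For the Fisher estimator this gives $\tilde{\cO}(\sqrt{d}\,G^2 \cdot \sqrt{d}/\hat{\sigma}_F) = \tilde{\cO}(G^2 d/\hat{\sigma}_F)$ queries of $\mathcal{U}_F$, and for the gradient estimator $\tilde{\cO}(\sqrt{d}\,G/(1-\gamma)^2 \cdot \sqrt{d}/\hat{\sigma}_g) = \tilde{\cO}(G d/((1-\gamma)^2 \hat{\sigma}_g))$ queries of $\mathcal{U}_g$. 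The ``in expectation'' qualifier arises from the randomized schedule used inside \texttt{QVarianceReduce}, whose total cost telescopes into a geometric series dominated by the worst-stage cost derived above.

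The main obstacle is the unbiasedness requirement, which is nontrivial because a single invocation of \texttt{QuantumMeanEstimation} is intrinsically biased: it only promises an estimate within the prescribed $\hat{\sigma}$-ball of the true mean with high probability. I would handle this by quoting the corresponding theorem from \citep{sidford2024quantum}, which patches this biasedness via a doubling / Russian-roulette wrapper that iteratively halves the error target and averages the corrections with geometrically decaying weights, yielding a genuinely unbiased output at an expected cost dominated by the cost of the coarsest stage. The remainder of the proof is bookkeeping, verifying in particular that the evaluation oracles $\mathcal{U}_F$ and $\mathcal{U}_g$ defined in \eqref{eq:quantumF}--\eqref{eq:quantumg} can be queried in the coherent form required by the Sidford--Zhang framework.
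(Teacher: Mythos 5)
Your proposal is correct and follows essentially the same route as the paper: the paper's proof is a one-line appeal to the black-box guarantee of \texttt{QVarianceReduce} (Lemma~\ref{variance_reduce_speedup}, i.e.\ Theorem~4 of \citep{sidford2024quantum}) combined with the variance bounds in \eqref{eq:Fgvariance}, which is exactly the substitution $L_g = \sqrt{d}\,G/(1-\gamma)^2$, $L_F = \sqrt{d}\,G^2$ you carry out. Your additional discussion of the de-biasing wrapper and the randomized query schedule is just an unpacking of what the cited lemma already asserts.
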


\noindent The proof of Theorem \ref{thm:quantum_sample_complexity} (in Appendix \ref{Algorithms_appendix}) leverages Lemma 1 to achieve variance reduction by applying quantum mean estimation techniques to the truncated estimators, ensuring unbiased estimates of the policy gradient and Fisher matrix with reduced variance. Note that although $\tilde{g}_h$ and $\tilde{g}_h$ are unbiased with respect to $\hat{g}_\rho(\tau_N | \theta)$ and $\hat{F}_\rho(\tau_N | \theta)$, they are still biased with respect to the actual policy gradient and Fisher information matrix. Theorem \ref{thm:quantum_sample_complexity} demonstrates that quantum variance reduction methods achieve a quadratic speedup in complexity compared to classical methods such as averaging, when reducing to the same variance. This step is crucial to achieve the overall polynomial speedup in the final result. We thus are now able to describe the first and second order errors for the quantum estimators $\tilde{F}_h $ and $\tilde{g}_h$.
\begin{lemma}\label{Fg_Qbound}
    Under assumption \ref{ass_score}, for a fixed $\theta$, we have the following results on $\tilde{F}_h$ and $\tilde{g}_h$.
    \begin{equation} \label{eq:Qgexpectation}
        || \E[\tilde{g}_h] - \nabla_{\theta} J_{\rho}(\theta) ||^2 \leq \hat{\delta}_g^2
    \end{equation}
    \begin{equation} \label{eq:Qgnormbound}
        \E  \big[|| \tilde{g}_h - \nabla_{\theta} J_{\rho}(\theta) ||^2\big] \leq \hat{\sigma}_g^2 + \hat{\delta}_g^2
    \end{equation}
    \begin{equation} \label{eq:QFexpectation}
        || \E[\tilde{F}_h] -  F_{\rho}(\theta) ||^2 \leq \hat{\delta}_F^2 
    \end{equation}
    \begin{equation} \label{eq:QFnormbound}
        \E  \big[|| \tilde{F}_h - F_{\rho}(\theta) ||^2\big] \leq \hat{\sigma}_F^2 + \hat{\delta}_F^2 
    \end{equation}
Where $\hat{\delta}_g^2 = G^2( \frac{N+1}{1-\gamma} + \frac{\gamma}{(1-\gamma)^2})^2 \gamma ^{2N}$ and $\hat{\delta}_F^2 = G^4 \gamma^{2N}$
\end{lemma}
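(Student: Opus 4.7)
\textbf{Proof proposal for Lemma~\ref{Fg_Qbound}.} The plan is to derive all four inequalities as a direct consequence of combining Theorem~\ref{Fg_bound} (which controls the bias and variance of the truncated estimators $\hat{g}_\rho(\tau_N\mid\theta)$ and $\hat{F}_\rho(\tau_N\mid\theta)$ with respect to the true quantities $\nabla_\theta J_\rho(\theta)$ and $F_\rho(\theta)$) with Theorem~\ref{thm:quantum_sample_complexity} (which states that the quantum variance-reduced outputs $\tilde g_h$ and $\tilde F_h$ are unbiased with respect to the truncated estimators and have variance at most the target $\hat\sigma_g^2$, $\hat\sigma_F^2$). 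No new probabilistic machinery is needed; the step is essentially bias–variance decomposition plus a substitution.

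First I would handle the bias inequalities \eqref{eq:Qgexpectation} and \eqref{eq:QFexpectation}. By Theorem~\ref{thm:quantum_sample_complexity}, $\E[\tilde g_h]=\E[\hat g_\rho(\tau_N\mid\theta)]$ and $\E[\tilde F_h]=\E[\hat F_\rho(\tau_N\mid\theta)]$, so the bias of the quantum estimator relative to the ground truth equals the bias of the truncated estimator. Squaring the bounds \eqref{eq:gexpectation} and \eqref{eq:Fexpectation} from Theorem~\ref{Fg_bound} yields \eqref{eq:Qgexpectation} and \eqref{eq:QFexpectation} immediately, with $\hat\delta_g$ and $\hat\delta_F$ carried over unchanged.

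Next I would handle the mean-squared error inequalities \eqref{eq:Qgnormbound} and \eqref{eq:QFnormbound} via the standard decomposition
\begin{equation*}
\E\bigl[\|\tilde g_h-\nabla_\theta J_\rho(\theta)\|^2\bigr]
=\E\bigl[\|\tilde g_h-\E[\tilde g_h]\|^2\bigr]+\|\E[\tilde g_h]-\nabla_\theta J_\rho(\theta)\|^2,
\end{equation*}
and analogously for $\tilde F_h$. The first term is exactly $\mathrm{Var}(\tilde g_h)=\hat\sigma_g^2$ by Theorem~\ref{thm:quantum_sample_complexity}, and the second term is at most $\hat\delta_g^2$ by the bound just established. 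Summing gives \eqref{eq:Qgnormbound}; the Fisher case is symmetric.

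There is no real obstacle: the only thing to be careful about is that Theorem~\ref{thm:quantum_sample_complexity} guarantees unbiasedness of $\tilde g_h,\tilde F_h$ only with respect to the truncated expectation, not with respect to $\nabla_\theta J_\rho(\theta)$ or $F_\rho(\theta)$, and that the variance convention used (trace of the covariance matrix) is compatible with the squared Euclidean norm appearing in the centered second moment, so the cross-term in the bias–variance decomposition vanishes. Once those two conventions are flagged, the four inequalities follow in a few lines.
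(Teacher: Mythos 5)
Your proposal is correct and follows essentially the same route as the paper's proof in Appendix \ref{appendix_FgQbound}: unbiasedness of the \texttt{QVarianceReduce} output reduces the bias bounds to squared versions of \eqref{eq:gexpectation} and \eqref{eq:Fexpectation}, and the mean-squared-error bounds follow from the same bias--variance decomposition with the vanishing cross term. Your explicit remark that the trace-of-covariance variance convention is what makes $\E\bigl[\|\tilde g_h-\E[\tilde g_h]\|^2\bigr]=\hat\sigma_g^2$ is a point the paper uses implicitly, so nothing is missing.
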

Note that $\hat{\delta}_g^2$ and $\hat{\delta}_F^2$ in Lemma \ref{Fg_Qbound} characterizes the exponentially decaying bias by truncation. The proof of Lemma \ref{Fg_Qbound} (in Appendix \ref{appendix_FgQbound}) decomposes the second order error into the sum of bias and variance terms, which are futher bounded by Theorem \ref{Fg_bound} and Theorem \ref{thm:quantum_sample_complexity} respectively. Compared with the classical NPG algorithms in \citep{mondal2024improved,liu2020improved}, the extra bias terms requires a more careful analysis. We  derive the error bound for the first-order and second-order approximation of $\omega_k^*$ as follows:

\begin{lemma}
\label{thm:npg-main}
Consider the NPG-finding loop \eqref{eq:npg_update} with $\alpha \leq \frac{\mu_F}{56G^4}$ and $G^2 \gamma^N \leq \frac{\mu_F}{8}$, for a fixed $\theta_k$, if assumptions \ref{ass_score}-\ref{ass_4} hold,
\begin{equation}\label{eq:omega_second_order}
\mathbb{E}\left[\|\omega_k - \omega_k^*\|^2  \right]
    \leq \exp\left(-{H\alpha \mu_F}\right)\E\|\omega_{0} -\omega^*\|^2 + C_0
\end{equation}
Additionally, we also have 
\begin{equation}\label{eq:omega_first_order}
\|\mathbb{E}[\omega_k] - \omega_k^*\|^2 \leq \exp\left(-{H\alpha \mu_F}\right)\|\E[\omega_{0}] -\omega^*\|^2 + C_1
\end{equation}
where 
\small
\begin{equation}
\begin{split}
    C_0 &= 4\mu_F^{-2} \left[ 2\hat{\delta}_F^2  \frac{\mu_F^{-2} G^2}{(1-\gamma)^4} + \hat{\delta}_g^2 \right] \\
    & \quad + 6\alpha\mu_F^{-1} \left[ \frac{\mu_F^{-2} G^2}{(1-\gamma)^4} 
    \left(\hat{\sigma}_F^2 \hat{\delta}_F^2 \right) + \left(\hat{\sigma}_g^2 + \hat{\delta}_g^2 \right) \right] \\
    &= {\cO} \left(\gamma^{2N} + \frac{\hat{\sigma}_F^2}{(1-\gamma)^4} + \hat{\sigma}_g^2\right) 
\end{split}
\end{equation}
\begin{equation}
\begin{split}
    C_1 &= \dfrac{6}{\mu_F} \left(\alpha + \dfrac{1}{\mu_F} \right) 
    \bigg[ \hat{\delta}_F^2 \bigg\{ \mathbb{E} \left[\|\omega_0 - \omega_k^*\|^2 \right] \\
    & \quad + \alpha R_0 + R_1 \bigg\} 
    + \frac{\mu_F^{-2} G^2}{(1-\gamma)^4} \hat{\delta}_F^2 + \hat{\delta}_g^2 \bigg] \\
    &= {\cO} \left(\gamma^{2N}\right)
\end{split}
\end{equation}
\normalsize
 while $R_0 =  6\mu_F^{-1}\left[\frac{\mu_F^{-2}G^2}{(1-\gamma)^4}\left(\hat{\sigma}_F^2 + \hat{\delta}_F^2 \right)+\left(\hat{\sigma}_g^2 + \hat{\delta}_g^2 \right)\right]$, $R_1 =4\mu_F^{-2}\left[2\hat{\delta}_F^2  \frac{\mu_F^{-2}G^2}{(1-\gamma)^4}+\hat{\delta}_g^2\right] $
\end{lemma}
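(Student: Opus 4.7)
The plan is to treat the inner-loop recursion $\omega_{h+1} = \omega_h - \alpha(\tilde F_h \omega_h - \tilde g_h)$ as biased, noisy SGD on the strongly convex quadratic $L_{\nu^{\pi_{\theta_k}}_\rho}(\cdot,\theta_k)$, whose minimizer is $\omega_k^*$ and whose gradient is $F_\rho(\theta_k)\omega - \nabla_\theta J_\rho(\theta_k)$. Using the identity $F_\rho(\theta_k)\omega_k^* = \nabla_\theta J_\rho(\theta_k)$, the error $e_h := \omega_h - \omega_k^*$ satisfies
\begin{equation}
e_{h+1} = (I - \alpha \tilde F_h)\,e_h - \alpha\,\xi_h, \quad \xi_h := (\tilde F_h - F_\rho(\theta_k))\omega_k^* - (\tilde g_h - \nabla_\theta J_\rho(\theta_k)).
\end{equation}
As a preliminary step I would record the uniform bound $\|\omega_k^*\| \leq G/((1-\gamma)^2\mu_F)$, obtained by combining Assumption \ref{ass_4} with Lemma \ref{lemma_3}(a); this is the source of the $\mu_F^{-2}G^2/(1-\gamma)^4$ factors appearing in both $C_0$ and $C_1$.

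For the first-order bound \eqref{eq:omega_first_order}, I would take expectations directly in the recursion. Since $(\tilde F_h, \tilde g_h)$ is drawn independently of $\omega_h$, this yields $\mathbb{E}[e_{h+1}] = (I - \alpha \bar F)\mathbb{E}[e_h] - \alpha \bar\xi$, with $\bar F := \mathbb{E}[\tilde F_h]$ and $\|\bar\xi\| \leq \hat\delta_F\|\omega_k^*\| + \hat\delta_g$ from Lemma \ref{Fg_Qbound}. The hypothesis $G^2\gamma^N \leq \mu_F/8$ forces $\hat\delta_F \leq \mu_F/8$, so $\bar F \succeq (7\mu_F/8) I$, and a Young-type inequality together with $\alpha \leq \mu_F/(56G^4)$ yields the contraction $\|I - \alpha\bar F\|^2 \leq 1 - \alpha\mu_F$. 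Iterating $H$ times and summing the geometric tail gives \eqref{eq:omega_first_order} with $C_1$ of order $(\hat\delta_F^2\|\omega_k^*\|^2 + \hat\delta_g^2)/\mu_F^2 = O(\gamma^{2N})$ as claimed.

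For the second-order bound \eqref{eq:omega_second_order}, I would rewrite the recursion as $e_{h+1} = (I - \alpha \bar F)e_h - \alpha \bar\xi - \alpha \zeta_h$, where $\zeta_h := (\tilde F_h - \bar F)\omega_h - (\tilde g_h - \bar g)$ is zero-mean conditional on $\omega_h$, and apply the bias-variance split
\begin{equation}
\mathbb{E}[\|e_{h+1}\|^2 \mid \omega_h] = \|(I - \alpha\bar F)e_h - \alpha\bar\xi\|^2 + \alpha^2\,\mathbb{E}[\|\zeta_h\|^2 \mid \omega_h].
\end{equation}
Young's inequality bounds the first term by $(1-\alpha\mu_F)\|e_h\|^2 + O(\alpha/\mu_F)\|\bar\xi\|^2$. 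For the second term, $\|\zeta_h\|^2 \leq 2\|\tilde F_h - \bar F\|^2\|\omega_h\|^2 + 2\|\tilde g_h - \bar g\|^2$ and $\|\omega_h\|^2 \leq 2\|e_h\|^2 + 2\|\omega_k^*\|^2$, so Lemma \ref{Fg_Qbound} yields $\mathbb{E}[\|\zeta_h\|^2 \mid \omega_h] \leq 4\hat\sigma_F^2\|e_h\|^2 + 4\hat\sigma_F^2\|\omega_k^*\|^2 + 2\hat\sigma_g^2$. The $\alpha^2\hat\sigma_F^2\|e_h\|^2$ contribution is absorbed into the contraction via the step-size restriction, leaving a constant floor of order $\alpha^2(\hat\sigma_F^2\|\omega_k^*\|^2 + \hat\sigma_g^2) + (\alpha/\mu_F)(\hat\delta_F^2\|\omega_k^*\|^2 + \hat\delta_g^2)$. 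Unrolling $H$ steps with $(1 - \alpha\mu_F)^H \leq \exp(-H\alpha\mu_F)$ and summing the geometric residual produces \eqref{eq:omega_second_order} with the stated $C_0$.

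The main obstacle I anticipate is the careful bookkeeping in the second-order bound: one must simultaneously (a) retain enough strong convexity in $\bar F$ to keep the full $1 - \alpha\mu_F$ contraction despite the $\hat\delta_F$ bias, (b) separate the cross terms between the stochastic part of $\tilde F_h$ and the current iterate $\omega_h = e_h + \omega_k^*$ without degrading the contraction coefficient, and (c) track how $\|\omega_k^*\| = O(1/\mu_F)$ combines with $\hat\sigma_F$ and $\hat\delta_F$ to produce the composite factors $\mu_F^{-2}G^2/(1-\gamma)^4$ present in the stated constants. The two threshold conditions $\alpha \leq \mu_F/(56G^4)$ and $G^2\gamma^N \leq \mu_F/8$ are precisely the hypotheses that make this accounting close.
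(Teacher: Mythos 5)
Your proposal is correct and reaches the same conclusions, but it organizes the error decomposition differently from the paper. The paper contracts around the true Fisher matrix $F_\rho(\theta_k)$: it expands $\|\omega_{h+1}-\omega_k^*\|^2$, extracts $-2\alpha\langle \omega_h-\omega_k^*,\,F_\rho(\theta_k)(\omega_h-\omega_k^*)\rangle \le -2\alpha\mu_F\|\omega_h-\omega_k^*\|^2$, and pushes everything else into the residual $\nabla_\omega\tilde L_h - F_\rho(\theta_k)(\omega_h-\omega_k^*)$, which it splits by a triangle/Young argument into a $\|\tilde F_h-F_\rho\|^2\|\omega_h-\omega_k^*\|^2$ term, a $\|\tilde F_h-F_\rho\|^2\|\omega_k^*\|^2$ term, and a $\|\tilde g_h-\nabla_\theta J_\rho\|^2$ term. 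You instead contract around $\bar F=\E[\tilde F_h]$, using $G^2\gamma^N\le\mu_F/8$ to keep $\bar F\succeq(7\mu_F/8)I$, and isolate the conditionally zero-mean fluctuation $\zeta_h$, which gives an exact orthogonal bias--variance split of the second moment in place of the paper's lossier $3(\cdot)$ inequality. The most visible payoff of your route is in the first-order bound: because the bias of $\tilde F_h$ acting on $e_h$ is absorbed into the contraction matrix itself, your recursion for $\E[e_h]$ is self-contained and your $C_1$ is of order $(\hat\delta_F^2\|\omega_k^*\|^2+\hat\delta_g^2)/\mu_F^2$, whereas the paper must control $\hat\delta_F^2\,\E\|\omega_h-\omega_k^*\|^2$ by feeding in the already-proved second-order bound, which is why its $C_1$ carries the extra $\E\|\omega_0-\omega_k^*\|^2+\alpha R_0+R_1$ factor. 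Both versions of $C_0$ and $C_1$ are $\cO\!\left(\gamma^{2N}+\frac{\hat\sigma_F^2}{(1-\gamma)^4}+\hat\sigma_g^2\right)$ and $\cO(\gamma^{2N})$ respectively once $\|\omega_k^*\|^2\le\mu_F^{-2}G^2/(1-\gamma)^4$ is inserted, so the downstream result is unaffected; the only bookkeeping you still owe is the $(1+\beta)$-Young splitting needed to turn the $(1-7\alpha\mu_F/8)$ operator-norm contraction into the stated $\exp(-H\alpha\mu_F)$ decay of the \emph{squared} norm without a spurious constant on the leading term.
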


The proof of Lemma \ref{thm:npg-main} (in Appendix \ref{appendix_omegabound}) proceeds by first expressing the error between the estimated gradient and the ideal gradient as a recursive update from the inner-loop iteration. It then demonstrates that each update contracts the error exponentially, meaning the error shrinks by a fixed multiplicative factor with each iteration. Subsequently, the proof carefully bounds the additional errors introduced by truncation bias and estimator variance, leading to explicit residual terms. Finally, these contraction and residual terms are combined through a recursive argument to establish the overall error guarantees.

Lemma \ref{thm:npg-main} shows that the bias terms $\cO(\gamma^N)$ from the gradient and Fisher matrix estimators would accumulate across iterations, leading to a total error contribution of the form
\small
\begin{equation}
    \cO\left( \sum_{h=0}^{H-1} \gamma^{2N} \right) \approx \cO\left( \frac{\gamma^{2N}}{1 - e^{-H \alpha \mu_F}} \right)
\nonumber
\end{equation}
\normalsize
while the learning rate $\alpha$ is chosen appropriately to ensure exponential decay, eventually leading to the terms $C_0$ and $C_1$. We next combine the \textit{outer loop} and \textit{inner loop} analysis with the quantum speedup results to form the final result.


\subsection{Final Result}
\label{sec_final_result}

Combining Lemma \ref{thm:npg-main} and the fact that the last term in $\eqref{eq:general_bound_corr}$ is $\mathcal{O}(1/K)$. Thus, by taking $H=\mathcal{O}(\log (\epsilon^{-1}))$, $N=\mathcal{O}(\log (\epsilon^{-1}))$, $K={\mathcal{O}}(\epsilon^{-1})$, $\hat{\sigma}^2_g={\mathcal{O}}(\epsilon)$ and $\hat{\sigma}^2_F ={\mathcal{O}}((1-\gamma)^4\epsilon)$, we guarantee the optimality gap to be at most $\sqrt{\epsilon_{\mathrm{bias}}}+\epsilon$. By Theorem \ref{thm:quantum_sample_complexity}, $\Tilde{\mathcal{O}}\big(\frac{G^2 d}{(1-\gamma)^2\sqrt{\epsilon}}\big)$ queries of $\mathcal{U}_{F}$ and $\Tilde{\mathcal{O}}\big(\frac{G d}{(1-\gamma)^2 \sqrt{\epsilon}}\big)$ queries of $\mathcal{U}_{g}$.  This results in a sample complexity of $\tilde{\mathcal{O}}\big(HKN\cdot \frac{d}{(1-\gamma)^2 \sqrt{\epsilon}}\big)=\tilde{\mathcal{O}}(1/\epsilon^{1.5})$ and an iteration complexity of $\mathcal{O}(K)=\tilde{\mathcal{O}}(1/\epsilon)$. The result is formalized in the following theorem.
\begin{theorem}
    \label{theorem_final}
    Let $\{\theta_k\}_{k=0}^{K-1}$ be the policy parameters generated by Algorithm \ref{algo_npg}, $\pi^*$ be the optimal policy and $J^*_\rho$ denote the optimal value of $J_\rho(\cdot)$ corresponding to an initial distribution $\rho$. Assume that assumptions $\ref{ass_score}-\ref{ass_4}$ hold and the learning parameters are chosen as stated in Lemma \ref{lemma:local_global} and Lemma \ref{thm:npg-main}. For all sufficiently small $\epsilon$, and  $H=\mathcal{O}(\log (\epsilon^{-1}))$, $N=\mathcal{O}(\log (\epsilon^{-1}))$, $K={\mathcal{O}}(\epsilon^{-1})$, $\hat{\sigma}^2_g={\mathcal{O}}(\epsilon)$ and $\hat{\sigma}^2_F ={\mathcal{O}}((1-\gamma)^4\epsilon)$, the following holds.
    \begin{equation}
    \label{eq:theorem_main_result}
        J_\rho^*-\dfrac{1}{K}\sum_{k=0}^{K-1}\E[J_\rho(\theta_k)] \leq \sqrt{\epsilon_{\mathrm{bias}}}+\epsilon
    \end{equation}
    This results in $\tilde{\mathcal{O}}(\epsilon^{-1.5})$ sample complexity and $\mathcal{O}(\epsilon^{-1})$ iteration complexity.
\end{theorem}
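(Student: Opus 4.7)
The plan is to establish the optimality gap by substituting the inner-loop error bounds from Lemma \ref{thm:npg-main} into the outer-loop decomposition of Lemma \ref{lemma:local_global}, and then match each resulting term to the target rate $\epsilon$ via the parameter choices in the statement. I would first invoke Lemma \ref{lemma:local_global} to split $J_\rho^* - \frac{1}{K}\sum_{k}\E[J_\rho(\theta_k)]$ into four pieces: the irreducible bias $\sqrt{\epsilon_{\mathrm{bias}}}$, the first-order error $\frac{G}{K}\sum_k \E\|\E[\omega_k|\theta_k]-\omega_k^*\|$, the second-order error proportional to $\frac{1}{K}\sum_k \E\|\omega_k-\omega_k^*\|^2$, and a term of order $1/K$. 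The last term is immediately $\mathcal{O}(\epsilon)$ from $K=\mathcal{O}(\epsilon^{-1})$.

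For the two inner-loop errors, Lemma \ref{thm:npg-main} bounds each by a contracting component $\exp(-H\alpha\mu_F)\|\omega_0-\omega^*\|^2$ plus the residual $C_0$ (second order) or $C_1$ (first order). I would then verify, one by one, that $C_0=\mathcal{O}(\epsilon)$ and $C_1=\mathcal{O}(\epsilon^2)$ under the stated choices: since $C_0 = \mathcal{O}(\gamma^{2N} + \hat{\sigma}_F^2/(1-\gamma)^4 + \hat{\sigma}_g^2)$, taking $N=\Theta(\log(\epsilon^{-1}))$ makes $\gamma^{2N}=\mathcal{O}(\epsilon^2)$, and the choices $\hat{\sigma}_F^2 = \mathcal{O}((1-\gamma)^4\epsilon)$ and $\hat{\sigma}_g^2 = \mathcal{O}(\epsilon)$ control the remaining two terms; since $C_1 = \mathcal{O}(\gamma^{2N})$, the same $N$ gives $C_1 = \mathcal{O}(\epsilon^2)$, so $\sqrt{C_1}=\mathcal{O}(\epsilon)$ as needed for the first-order term after taking square roots. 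Finally, taking $H=\Theta(\log(\epsilon^{-1}))$ with a sufficiently large constant drives the contracting factor $\exp(-H\alpha\mu_F)$ below $\epsilon^2$; combined with the constant boundedness of $\|\omega_0-\omega^*\|$ that follows from Assumptions \ref{ass_score} and \ref{ass_4}, this ensures the contracting piece is dominated by the residual. Summing the three $\mathcal{O}(\epsilon)$ contributions with $\sqrt{\epsilon_{\mathrm{bias}}}$ delivers \eqref{eq:theorem_main_result}.

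For the sample complexity, I would plug $\hat{\sigma}_F = \mathcal{O}((1-\gamma)^2\sqrt{\epsilon})$ and $\hat{\sigma}_g = \mathcal{O}(\sqrt{\epsilon})$ into Theorem \ref{thm:quantum_sample_complexity}, yielding $\tilde{\mathcal{O}}(G^2 d/((1-\gamma)^2\sqrt{\epsilon}))$ and $\tilde{\mathcal{O}}(Gd/((1-\gamma)^2\sqrt{\epsilon}))$ queries to $\mathcal{U}_F$ and $\mathcal{U}_g$ per inner step, respectively. By Lemma \ref{quantum_sample} each such query expands into $\mathcal{O}(N)$ primitive queries of $\mathcal{U}_P$ and $\Pi$. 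Multiplying by $K$ outer iterations and $H$ inner iterations, and noting that $HN=\tilde{\mathcal{O}}(1)$ under the logarithmic scalings, gives the total complexity $K\cdot H\cdot N\cdot\tilde{\mathcal{O}}(\epsilon^{-1/2}) = \tilde{\mathcal{O}}(\epsilon^{-1.5})$, while the iteration complexity is simply $K=\mathcal{O}(\epsilon^{-1})$.

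The main obstacle I anticipate is bookkeeping the $\log(\epsilon^{-1})$ prefactors consistently, in particular making sure that $N$ is chosen large enough to simultaneously enforce $\gamma^{2N}=\mathcal{O}(\epsilon^2)$ for the first-order residual, $C_0=\mathcal{O}(\epsilon)$ for the second-order residual, and the admissibility condition $G^2\gamma^N\leq \mu_F/8$ from Lemma \ref{thm:npg-main}; all three are subsumed by $N = \Theta(\log(\epsilon^{-1}))$ with a sufficiently large multiplicative constant. A secondary check is that the step-size choice $\alpha\leq \mu_F/(56G^4)$ does not inflate the $H = \Theta(\log(\epsilon^{-1}))$ requirement beyond what the final sample-count multiplication can absorb, which is automatic because the resulting constant prefactor enters only inside the $\tilde{\mathcal{O}}$.
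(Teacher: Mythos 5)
Your proposal is correct and follows essentially the same route as the paper's own proof: it substitutes the contraction-plus-residual bounds of Lemma \ref{thm:npg-main} into the decomposition of Lemma \ref{lemma:local_global}, matches $C_0$, $\sqrt{C_1}$, the $\exp(-H\alpha\mu_F)$ term, and the $1/K$ term to $\mathcal{O}(\epsilon)$ under the stated parameter choices, and then counts queries via Theorem \ref{thm:quantum_sample_complexity} and Lemma \ref{quantum_sample} to obtain $\tilde{\mathcal{O}}(HKN\cdot\epsilon^{-1/2})=\tilde{\mathcal{O}}(\epsilon^{-1.5})$. The only cosmetic difference is that the paper packages the intermediate constants as $C_2=G\sqrt{C_1}$, $C_3$, $C_4$ before passing to the asymptotic form, which you do implicitly.
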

\noindent The resulting sample complexity $\tilde{\mathcal{O}}(\epsilon^{-1.5})$ is due to the quantum speedup in the \textit{inner loop} while the \textit{outer  loop} maintained as in the classical settings. This aligns with the results of quantum stochastic optimizations \citep{sidford2024quantum} due to the same double-loop algorithm structure with classical \textit{outer loop} and quantum \textit{inner loop}. Note that $H$ is chosen to be logarithmically dependent on the total samples to account for the exponentially decaying bias term of the truncation error.

\section{Conclusion}
This paper introduces the Quantum Natural Policy Gradient (QNPG) algorithm, which utilizes quantum mean estimation and a deterministic gradient sampling approach to address sample efficiency challenges in infinite-horizon reinforcement learning. By adapting the sampling method in the classical state-of-the-art result to a quantum-compatible deterministic procedure, we achieve a sample complexity of $\tilde{\mathcal{O}}(\epsilon^{-1.5})$, improving upon the $\tilde{\mathcal{O}}(\epsilon^{-2})$ result of the classical state-of-the-art result. The algorithm balances the trade-off between deterministic encoding and bounded bias, demonstrating the feasibility of leveraging quantum acceleration in policy optimization.

While our study focuses on policy gradient methods, quantum actor–critic algorithms remain largely unexplored. Recent classical results offer order-optimal sample complexity guarantees \cite{gaur2024closing,ganesh2025order}, making the search for quantum speedups in this setting a key open direction. Another open direction is extending the constrained setup studied in \cite{mondal2024sample} to the quantum setting.



\section*{Acknowledgment}

This work is supported in part by the National Center for Transportation Cybersecurity and Resiliency (TraCR) (a U.S. Department of Transportation National University Transportation Center) headquartered at Clemson University, Clemson, South Carolina, USA. Any opinions, findings, conclusions, and recommendations expressed in this material are those of the author(s) and do not necessarily reflect the views of TraCR, and the U.S. Government assumes no liability for the contents or use thereof.

\section*{Impact Statement}
This paper presents work whose goal is to advance the field of Machine Learning. There are many potential societal consequences of our work, none which we feel must be specifically highlighted here.

\bibliography{main}
\bibliographystyle{icml2025}

\newpage
\appendix
\onecolumn


\section{Proof of Lemma \ref{quantum_sample}} \label{quantam_oracle_appendix}
\subsection{Quantum Superpositioned Trajectories} \label{B.1}
We first explain the construction of the required unitary $\mathcal{U}_{P(\tau_N)}$ step by step, showing it uses one call to the initial-state oracle $\mathcal{U}_\rho$ and $\mathcal{O}(N)$ calls to the transition oracle $\mathcal{U}_P$ and policy oracle~$\Pi$.

\paragraph{Initial registers.}
We start with the following (quantum) registers:
\begin{itemize}
    \item A register $|\theta\rangle$ containing the policy parameters (which will remain untouched except as a control for~$\Pi$).
    \item $N$ state registers $\bigl|0\bigr\rangle_{\mathsf{S}_0}, \ldots, \bigl|0\bigr\rangle_{\mathsf{S}_{N-1}}$.
    \item $N$ action registers $\bigl|0\bigr\rangle_{\mathsf{A}_0}, \ldots, \bigl|0\bigr\rangle_{\mathsf{A}_{N-1}}$.
\end{itemize}
Thus the total initial state is 
\[
    |\theta\rangle \,\otimes\, \bigl|0\bigr\rangle^{\otimes 2N}.
\]

\paragraph{Load the initial state.}
We apply $\mathcal{U}_\rho$ (see \eqref{eq:quantum_initial}) to the first state register $\mathsf{S}_0$, which transforms
\[
    \bigl|0\bigr\rangle_{\mathsf{S}_0}
    \;\xmapsto{\mathcal{U}_\rho}\;
    \sum_{s_0 \in \mathcal{S}} \sqrt{\rho(s_0)}\, \bigl|s_0\bigr\rangle_{\mathsf{S}_0}.
\]
The other registers remain in the all-zero state. Therefore, the global state becomes
\[
    |\theta\rangle \;\otimes\;
    \sum_{s_0 \in \mathcal{S}} \sqrt{\rho(s_0)}\, \bigl|s_0\bigr\rangle_{\mathsf{S}_0}
    \;\otimes\;
    \bigl|0\bigr\rangle_{\mathsf{S}_1}\cdots \bigl|0\bigr\rangle_{\mathsf{S}_{N-1}}
    \;\otimes\;
    \bigl|0\bigr\rangle_{\mathsf{A}_0}\cdots \bigl|0\bigr\rangle_{\mathsf{A}_{N-1}}.
\]

\paragraph{Iterate over $t = 0, \ldots, N-1$.}
For each time step~$t$, we perform two sub-steps:

\begin{enumerate}
\item \emph{Policy oracle $\Pi$.} By definition (see \eqref{eq:quantum_policy}), $\Pi$ implements
\[
    \Pi : 
    \quad
    |\theta\rangle \,|s\rangle \,|0\rangle 
    \;\;\longmapsto\;\;
    |\theta\rangle \,|s\rangle \sum_{a \in \mathcal{A}} \sqrt{\pi_\theta(a | s)}\, |a\rangle.
\]
We apply $\Pi$ to $|\theta\rangle$, the state register~$\mathsf{S}_t$, and the action register~$\mathsf{A}_t$. This sends
\[
    \bigl|s_t\bigr\rangle_{\mathsf{S}_t}\, \bigl|0\bigr\rangle_{\mathsf{A}_t}
    \;\;\longmapsto\;\;
    \bigl|s_t\bigr\rangle_{\mathsf{S}_t}
    \sum_{a_t \in \mathcal{A}} 
        \sqrt{\pi_\theta(a_t | s_t)} \, \bigl|a_t\bigr\rangle_{\mathsf{A}_t}.
\]

\item \emph{Transition oracle $\mathcal{U}_P$.} By definition (see \eqref{eq:quantum_transition}), $\mathcal{U}_P$ implements
\[
    \mathcal{U}_P : 
    \quad
    |s_t\rangle \,|a_t\rangle \,|0\rangle
    \;\;\longmapsto\;\;
    |s_t\rangle \,|a_t\rangle
    \sum_{s_{t+1} \in \mathcal{S}} 
        \sqrt{P(s_{t+1} | s_t,a_t)} \, |s_{t+1}\rangle.
\]
We apply this to registers $\mathsf{S}_t$, $\mathsf{A}_t$, and~$\mathsf{S}_{t+1}$, which transforms
\[
    \bigl|s_t, a_t, 0\bigr\rangle
    \;\;\longmapsto\;\;
    \bigl|s_t, a_t\bigr\rangle 
    \sum_{s_{t+1} \in \mathcal{S}} 
        \sqrt{P(s_{t+1} | s_t,a_t)} \, \bigl|s_{t+1}\bigr\rangle.
\]
\end{enumerate}

After applying $\Pi$ and $\mathcal{U}_P$ at each step $t$, the amplitudes multiply to give
\[
    \sqrt{\pi_\theta(a_t | s_t)} \times \sqrt{P\bigl(s_{t+1} | s_t,a_t\bigr)}.
\]

\paragraph{Final state.}
After these $N$ steps, the global state is
\[
    \bigl|\theta\bigr\rangle
    \sum_{s_0, \, a_0, \,\ldots,\, s_{N-1}, \, a_{N-1}}
    \sqrt{\,\rho(s_0)}
    \,\prod_{t=0}^{N-1} 
    \sqrt{\,\pi_\theta(a_t | s_t)\,P(s_{t+1} | s_t,a_t)}
    \;
    \bigl|\,s_0, a_0, \, s_1, a_1, \,\ldots,\, s_{N-1}, a_{N-1}\bigr\rangle.
\]
Observe that the squared amplitude of each trajectory $\tau_N$ matches exactly $P_\theta(\tau_N)$.  Hence,
\[
    \mathcal{U}_{P(\tau_N)} : 
    \quad
    |\theta\rangle \, |0\rangle^{\otimes 2N}
    \;\;\longmapsto\;\;
    |\theta\rangle 
    \sum_{\tau_N} 
    \sqrt{P_\theta(\tau_N)} \, \bigl|\tau_N\bigr\rangle.
\]

\paragraph{Query complexity.}
\begin{itemize}
    \item We call $\mathcal{U}_\rho$ \emph{once} to load $s_0$ in the register $\mathsf{S}_0$.
    \item For each step $t=0,\dots,N-1$, we call $\Pi$ \emph{once} and $\mathcal{U}_P$ \emph{once}.
\end{itemize}
Thus, the total number of queries for one sample of $\lvert\Psi_\theta\rangle$ to the “environment + policy” oracles is $1 + 2N = \mathcal{O}(N)$.

\subsection{Quantum NPG Estimators }

\paragraph{1. Constructing the superposition \(\lvert\Psi_\theta\rangle\).}
By \ref{B.1}, we have a unitary
\[
    \mathcal{U}_{P(\tau_N)} 
    : 
    \lvert\theta\rangle \,\lvert 0\rangle^{\otimes 2N}
    \;\;\longmapsto\;\;
    \lvert\theta\rangle 
    \sum_{\tau_N} \sqrt{P_\theta(\tau_N)} \,\lvert\tau_N\rangle,
\]
implemented via one call to \(\mathcal{U}_\rho\) and \(\mathcal{O}(N)\) queries to the transition oracle \(\mathcal{U}_P\) and the policy oracle~\(\Pi\).  
Hence we can prepare
\[
    \lvert\Psi_{\theta}\rangle
    \;=\;
    \lvert\theta\rangle 
    \sum_{\tau_N} \sqrt{P_{\theta}(\tau_N)} \,\lvert\tau_N\rangle,
\]
a coherent superposition over all length-\(N\) trajectories \(\tau_N\) with amplitudes \(\sqrt{P_\theta(\tau_N)}\).
\paragraph{2. Defining the estimators \(\hat{F}_\rho\) and \(\hat{g}_\rho\).}
We recall the definitions in~\eqref{eq:g_hat} and \eqref{eq:F_hat}:  
\[
\hat{g}_\rho(\tau_N | \theta) 
  \;=\; 
  \sum_{n=0}^{N-1} 
   \biggl(\!\sum_{t=0}^{n} \nabla_\theta \log \pi_\theta(a_t | s_t)\!\biggr)
   \bigl(\gamma^n\,r(s_n, a_n)\bigr),
\]
\[
\hat{F}_\rho(\tau_N | \theta) 
  \;=\; 
  (1-\gamma)\sum_{n=0}^{N-1} \gamma^n 
     \Bigl(\nabla_\theta \log \pi_\theta(a_n | s_n)\Bigr)\otimes 
     \Bigl(\nabla_\theta \log \pi_\theta(a_n | s_n)\Bigr).
\]
Because \(\pi_\theta(\cdot | \cdot)\), \(r(\cdot,\cdot)\), and discounting can be computed classically in polynomial time, these functions are straightforwardly computable given full knowledge of \(\tau_N\).
\paragraph{3. Quantum evaluation of classical functions.}
A standard result in quantum computing states that any efficiently computable function \(f(x)\) admits a polynomial-size quantum circuit implementing 
\[
    \lvert x\rangle \,\lvert 0\rangle 
    \;\longmapsto\;
    \lvert x\rangle \,\lvert f(x)\rangle
\]
in coherence.  
Applying this principle:
\begin{itemize}
    \item Let \(\mathcal{U}_F\) be the unitary that computes \(\hat{F}_\rho(\tau_N | \theta)\) given \(\tau_N\). Concretely:
    \[
      \mathcal{U}_F : 
       \lvert \tau_N\rangle \,\lvert 0\rangle 
        \;\longmapsto\;
       \lvert \tau_N\rangle \;\lvert \hat{F}_\rho(\tau_N | \theta)\rangle.
    \]
    \item Let \(\mathcal{U}_g\) be the unitary that computes \(\hat{g}_\rho(\tau_N \mid \theta)\) similarly.
\end{itemize}
These circuits require no additional calls to the environment oracles, since they merely read off the classical data \((s_t,a_t)\) in the registers and perform the classical computations needed for \(\nabla_\theta \log \pi_\theta\), discounting, and rewards.
\paragraph{4. Acting on the superposition \(\lvert\Psi_{\theta}\rangle\).}
Once we have 
\(\lvert\Psi_{\theta}\rangle = \lvert\theta\rangle \sum_{\tau_N} \sqrt{P_{\theta}(\tau_N)}\,\lvert\tau_N\rangle\),
we append an auxiliary register \(\lvert 0\rangle\) and apply \(\mathcal{U}_F\) or \(\mathcal{U}_g\) in superposition.  By linearity:
\[
    \bigl(\mathrm{Id} \otimes \mathcal{U}_F\bigr) 
    \Bigl(\lvert\Psi_\theta\rangle \otimes \lvert 0\rangle\Bigr)
    \;=\;
    \lvert\theta\rangle 
    \sum_{\tau_N} \sqrt{P_\theta(\tau_N)} \,
    \lvert\tau_N\rangle \,\otimes\, \lvert \hat{F}_\rho(\tau_N | \theta)\rangle.
\]
Analogously for \(\mathcal{U}_g\).  This precisely shows
\[
    \mathcal{U}_F \Bigl(\,\lvert\Psi_{\theta}\rangle \otimes \lvert 0\rangle\Bigr) 
    \;=\; 
    \lvert\theta\rangle 
    \sum_{\tau_N} \sqrt{P_{\theta}(\tau_N)} \,\lvert\tau_N\rangle 
     \otimes \bigl\lvert \hat{F}_\rho(\tau_N | \theta)\bigr\rangle,
\]
and similarly for \(\hat{g}_\rho(\tau_N | \theta)\).
\paragraph{5. Query complexity.}
\begin{itemize}
    \item \emph{Preparing the superposition \(\lvert\Psi_{\theta}\rangle\):} 
    This needs \(\mathcal{O}(N)\) queries to \(\mathcal{U}_P\) (the transition oracle) and \(\Pi\) (the policy oracle), plus one query to \(\mathcal{U}_\rho\).
    \item \emph{Computing \(\hat{F}_\rho\) or \(\hat{g}_\rho\) in superposition:} 
    The unitaries \(\mathcal{U}_F\) and \(\mathcal{U}_g\) are classical function simulations that do not require extra queries to \(\mathcal{U}_P\).  They simply act on the \((s_t,a_t)\) registers directly.
\end{itemize}
Hence, the overall cost in environment queries remains \(\mathcal{O}(N)\) for a length-\(N\) trajectory.

\section{Details of Quantum Variance Reduction} \label{Algorithms_appendix}  

\begin{algorithm}
    \caption{{$\texttt{QVarianceReduce}$ (Algorithm 2 in \cite{sidford2024quantum})}}
    \label{algo:biased-Q}
    \begin{algorithmic}[1]
        \STATE \textbf{Input:} Random variable $X$, target variance $\hat{\sigma}^2$, $\texttt{QuantumMeanEstimation}^+$ from Algorithm \ref{algo:QME}
        \STATE \textbf{Output:} An unbiased estimate $\hat{\mu}$ of $X$ with variance at most $\hat{\sigma}^2$
        \STATE Set $\tilde{\mu}_0\leftarrow$\texttt{QuantumMeanEstimation}$^+(X,\hat{\sigma}/10)$\;
	\STATE Randomly sample $j\sim\mathrm{Geom}\left(\frac{1}{2}\right)\in \mathbb{N}$\;
        \STATE $\tilde{\mu}_j\leftarrow$\texttt{QuantumMeanEstimation}$^+(X,2^{-3j/4}\hat{\sigma}/10)$\;
        \STATE $\tilde{\mu}_{j-1}\leftarrow$\texttt{QuantumMeanEstimation}$^+(X,2^{-3(j-1)/4}\hat{\sigma}/10)$\;
        \STATE $\hat{\mu}\leftarrow\tilde{\mu}_0+2^j(\tilde{\mu}_j-\tilde{\mu}_{j-1})$\label{lin:de-biasing}\;
        \STATE \textbf{Return:} $\hat{\mu}$\;
    \end{algorithmic}
\end{algorithm}
\begin{algorithm}
    \caption{{$\texttt{QuantumMeanEstimation}^+$ (Algorithm 1 in \cite{sidford2024quantum})}}
    \label{algo:QME}
    \begin{algorithmic}[1]
        \STATE \textbf{Input:} Random variable $X$, target variance $\hat{\sigma}^2\leq L^2$, $\texttt{QuantumMeanEstimation}$ from Lemma 1, parameters $\delta=\hat{\sigma}^6 / (4L)^6$ and $D=\frac{\hat{\sigma}}{4} + \frac{16L^3}{\hat{\sigma}^2}$
        \STATE \textbf{Output:} An estimate $\hat{\mu}$ satisfying $\E\norm{\hat{\mu} - \E[X]}^2 \leq \hat{\sigma}^2$
        \STATE Set $X_1 \leftarrow$\texttt{QuantumMeanEstimation}$(X,\hat{\sigma}/4,\delta)$\;
	\STATE Randomly draw a classical sample $X_2$ of $X$\;
        \IF {$\norm{X_1 - X_2} \leq D$}
            \STATE \textbf{Return:} $X_1$
        \ELSE
            \STATE Randomly draw one classical sample $X_3$ of $X$
            \STATE \textbf{Return:} $X_3$
        \ENDIF
    \end{algorithmic}
\end{algorithm}

Algorithm \ref{algo:biased-Q} is the detailed description of \texttt{QVarianceReduce}, which provides a quadratic sample complexity speedup shown in the following Lemma:
\begin{lemma}[Theorem 4 of \citep{sidford2024quantum}] \label{variance_reduce_speedup}
    Algorithm \ref{algo:biased-Q} returns the desired result using expected $\tilde{\mathcal{O}}(L\sqrt{d}/\hat{\sigma})$ queries, with random variable $X$ having dimension $d$ and variance bounded by $L^2$.
\end{lemma}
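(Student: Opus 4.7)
The plan is to establish both the statistical correctness of Algorithm \ref{algo:biased-Q} (unbiasedness with mean-squared error at most $\hat{\sigma}^2$) and its expected query complexity, treating \texttt{QuantumMeanEstimation}$^+$ from Algorithm \ref{algo:QME} as a black box whose behavior I would analyze separately first. The initial step is to show that Algorithm \ref{algo:QME}, run with a target variance $\sigma'^2$, produces an estimator $\tilde{\mu}$ satisfying $\mathbb{E}\|\tilde{\mu}-\mathbb{E}[X]\|^2 \leq \sigma'^2$ using $\tilde{\mathcal{O}}(L\sqrt{d}/\sigma')$ queries. The query count follows directly by invoking Lemma \ref{lem: SubGau} with accuracy $\sigma'/4$ and failure probability $\delta = \sigma'^6/(4L)^6$: the resulting factor $\log(1/\delta)$ is polylogarithmic in $L/\sigma'$ and hence absorbed into $\tilde{\mathcal{O}}(\cdot)$. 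For the MSE bound I would split on the success event of Lemma \ref{lem: SubGau}: on success one has $\|X_1-\mathbb{E}[X]\|\leq\sigma'/4$, and the check $\|X_1-X_2\|\leq D$ holds with high probability by Markov's inequality applied to $\|X_2-\mathbb{E}[X]\|^2\leq L^2$, so returning $X_1$ contributes $O(\sigma'^2)$ to the MSE; on failure, returning the fresh classical sample $X_3$ yields squared error at most $L^2$, whose weighted contribution $\delta\cdot L^2$ is $O(\sigma'^2)$ by the choice of $\delta$.

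Next I would analyze the debiasing step of Algorithm \ref{algo:biased-Q}. Letting $\bar{\mu}_k = \mathbb{E}[\tilde{\mu}_k]$ and using the geometric prior $\mathbb{P}(j=k) = 2^{-k}$, a direct computation yields
\[
\mathbb{E}\bigl[2^j(\tilde{\mu}_j-\tilde{\mu}_{j-1})\bigr] = \sum_{k\geq 1}(\bar{\mu}_k-\bar{\mu}_{k-1}) = \lim_{k\to\infty}\bar{\mu}_k - \bar{\mu}_0,
\]
and since the per-level MSE bound from the previous step forces $\bar{\mu}_k\to\mathbb{E}[X]$, we conclude $\mathbb{E}[\hat{\mu}] = \mathbb{E}[X]$. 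For the variance, I would apply the law of total variance; the dominant term is
\[
\mathbb{E}_j\bigl[2^{2j}\,\mathbb{E}\|\tilde{\mu}_j-\tilde{\mu}_{j-1}\|^2\bigr] \leq \mathbb{E}_j\bigl[2^{2j}\cdot O(2^{-3j/2}\hat{\sigma}^2)\bigr] = O(\hat{\sigma}^2)\sum_{k\geq 1} 2^{-k}\cdot 2^{k/2},
\]
a geometric series summing to $O(\hat{\sigma}^2)$. The exponent $3/4$ in the target-variance schedule $2^{-3k/4}\hat{\sigma}/10$ is precisely the unique choice making both this variance sum and the query-cost sum below converge.

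The expected query complexity follows by summing per-level costs against the geometric prior. Level $k$ costs $\tilde{\mathcal{O}}(L\sqrt{d}/(2^{-3k/4}\hat{\sigma})) = \tilde{\mathcal{O}}(L\sqrt{d}\cdot 2^{3k/4}/\hat{\sigma})$ by the first step and is drawn with probability $2^{-k}$, so
\[
\mathbb{E}[\text{queries}] = \tilde{\mathcal{O}}\!\left(\frac{L\sqrt{d}}{\hat{\sigma}}\right)\left(1 + \sum_{k\geq 1} 2^{-k/4}\right) = \tilde{\mathcal{O}}\!\left(\frac{L\sqrt{d}}{\hat{\sigma}}\right),
\]
where the leading $1$ accounts for $\tilde{\mu}_0$. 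The main obstacle I anticipate is the delicate case analysis inside Algorithm \ref{algo:QME}: the check $\|X_1-X_2\|\leq D$ must simultaneously absorb the rare high-error tail of $X_1$ coming from the failure mode of Lemma \ref{lem: SubGau} (via the classical fallback $X_3$) while not rejecting too often on the good event. Balancing the threshold $D = \hat{\sigma}/4 + 16L^3/\hat{\sigma}^2$ against the failure probability $\delta = \hat{\sigma}^6/(4L)^6$ so that both contributions to the MSE stay at $O(\hat{\sigma}^2)$ leaves essentially no slack in either direction and requires careful bookkeeping.
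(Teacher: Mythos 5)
You should first note a structural fact: the paper does not prove this lemma at all. It is imported verbatim as Theorem 4 of \citep{sidford2024quantum}; Appendix B merely reproduces Algorithms \ref{algo:biased-Q} and \ref{algo:QME} and then invokes the lemma as a black box in the one-line proof of Theorem \ref{thm:quantum_sample_complexity}. Your proposal therefore goes beyond the paper by reconstructing the argument of the cited reference, and the reconstruction is sound in its main architecture: the per-level guarantee for $\texttt{QuantumMeanEstimation}^+$ obtained from Lemma \ref{lem: SubGau} with $\log(1/\delta)=\mathrm{polylog}(L/\sigma')$ absorbed into $\tilde{\mathcal{O}}(\cdot)$; the telescoping identity $\mathbb{E}\bigl[2^j(\tilde{\mu}_j-\tilde{\mu}_{j-1})\bigr]=\lim_k\bar{\mu}_k-\bar{\mu}_0$, which is legitimate because the per-level MSE bound forces $\|\bar{\mu}_k-\mathbb{E}[X]\|\leq 2^{-3k/4}\hat{\sigma}/10\to 0$ geometrically (this also gives the absolute summability needed to exchange sum and expectation); the variance series $\sum_k 2^{-k}\cdot 2^{2k}\cdot 2^{-3k/2}=\sum_k 2^{-k/2}<\infty$; and the cost series $\sum_k 2^{-k}\cdot 2^{3k/4}<\infty$. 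This is exactly the multilevel-Monte-Carlo debiasing mechanism on which the cited theorem rests, so what your proof buys is a self-contained justification of a result the paper takes on faith.

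Two imprecisions are worth fixing, though neither breaks the argument. First, your case analysis of Algorithm \ref{algo:QME} misidentifies the dangerous event: when the estimator of Lemma \ref{lem: SubGau} fails (probability $\delta$), the algorithm does \emph{not} automatically fall back to $X_3$; it returns the (potentially arbitrary) $X_1$ whenever the check $\|X_1-X_2\|\leq D$ happens to pass. That event is controlled not by $\delta L^2$ as you wrote, but by the triangle inequality $\|X_1-\mathbb{E}[X]\|\leq D+\|X_2-\mathbb{E}[X]\|$, giving a contribution of order $\delta D^2\approx \bigl(\hat{\sigma}^6/(4L)^6\bigr)\cdot\bigl(16L^3/\hat{\sigma}^2\bigr)^2=O(\hat{\sigma}^2)$ --- this is the entire reason the threshold $D$ and the failure probability $\delta$ are tuned against each other. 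The event handled by Markov applied to $X_2$ is the complementary one (estimation succeeds but the check rejects), contributing $L^2\cdot L^2/D^2\leq\hat{\sigma}^2$ via the returned $X_3$. Second, the exponent $3/4$ is not the unique choice: any exponent strictly between $1/2$ (required for the variance series) and $1$ (required for the cost series) works; $3/4$ is simply the symmetric choice. With these corrections your sketch matches the proof in the cited source.
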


\subsection{Proof of Theorem \ref{thm:quantum_sample_complexity}}
This is a direct result of combining Lemma \ref{variance_reduce_speedup} with the variance bounds of $\hat{g}_\rho(\tau_N | \theta)$ and $\hat{F}_\rho(\tau_N | \theta)$ in \eqref{eq:Fgvariance}.

\section{Proof of Theorem \ref{Fg_bound}} \label{appendix_Fgbound}
For a fixed $\theta$, we first note that $g(\tau | \theta)$ and $F(\tau|\theta)$ constructed as below would be an unbiaed estimator for $\nabla_\theta J_\rho(\theta)$ and $F_\rho(\theta)$:
\begin{equation} \label{eq:unbiasedg}
    g(\tau | \theta) = \sum_{n=0}^{\infty} \left( \sum_{t=0}^{n} \nabla_\theta \log \pi_\theta(a_t | s_t) \right) \left( \gamma^n r(s_n, a_n) \right),
\end{equation}
\begin{equation} \label{eq:unbiasedF}
    F(\tau | \theta) = (1-\gamma)\sum_{n=0}^{\infty} \bigg( \gamma^n 
 \nabla_\theta \log \pi_\theta(a_n | s_n) \otimes \log \pi_\theta(a_n | s_n) \bigg),
\end{equation}

where $\tau = (s_0,a_0, \ldots, s_k,a_k, \ldots)$ is an infinite length trajectory, generated by taking policy $\pi_\theta$  with $s_0$ sampled from $\rho$. \eqref{eq:unbiasedg} is a standard expression of the policy gradient unbiased estimator used in \citep{baxter2001infinite,liu2020improved}. To prove that \eqref{eq:unbiasedF} is an unbiased estimator of $F_\rho(\theta)$, note that 
\begin{align}
    \mathbb{E}_{\tau}[\hat{F}(\tau | \theta)] &= \mathbb{E}_{\tau \sim (\rho, \pi_\theta)} \left[(1 - \gamma) \sum_{n=0}^\infty \gamma^n \nabla_\theta \log \pi_\theta(a_n | s_n) \otimes \nabla_\theta \log \pi_\theta(a_n | s_n) \right] \\
&= (1 - \gamma) \sum_{n=0}^\infty \gamma^n \mathbb{E}_{\tau} \left[\nabla_\theta \log \pi_\theta(a_n | s_n) \otimes \nabla_\theta \log \pi_\theta(a_n | s_n) \right]\\
&=(1 - \gamma) \sum_{n=0}^{\infty} \gamma^n \sum_{s, a} \text{Pr}(s_n = s) \pi_\theta(a | s) \nabla_\theta \log \pi_\theta(a | s) \otimes \nabla_\theta \log \pi_\theta(a | s) \\
&= \sum_{s, a} \left[(1 - \gamma) \sum_{n=0}^\infty \gamma^n \text{Pr}(s_n = s) \right] \pi_\theta(a | s) \nabla_\theta \log \pi_\theta(a | s) \otimes \nabla_\theta \log \pi_\theta(a | s) \\
&= \sum_{s, a} d_\rho^\pi(s) \pi_\theta(a | s) \nabla_\theta \log \pi_\theta(a | s) \otimes \nabla_\theta \log \pi_\theta(a | s) \\
&= \sum_{s, a} \nu_\rho^\pi(s, a) \nabla_\theta \log \pi_\theta(a | s) \otimes \nabla_\theta \log \pi_\theta(a | s) \\
&= F_\rho(\theta).
\end{align}

Therefore we have
\begin{align}
    || \E[\hat{g}_\rho(\tau_N | \theta)] - \nabla_{\theta} J_{\rho}(\theta) || &= || \E[\hat{g}_\rho(\tau_N | \theta)] - \E[ g(\tau | \theta)] || \nonumber\\
    & = \bigg|\bigg| \E\bigg[ \sum_{n=N}^{\infty} \left( \sum_{t=0}^{n} \nabla_\theta \log \pi_\theta(a_t | s_t) \right) \left( \gamma^n r(s_n, a_n) \right)\bigg]\bigg|\bigg|  \nonumber \\
    & \leq ||G \sum_{h=N}^{\infty}(n+1)\gamma^n  || \nonumber\\
    & = G\bigg( \frac{N+1}{1-\gamma} + \frac{\gamma}{(1-\gamma)^2} \bigg) \gamma ^N \label{eq:gexpectationnorm}
\end{align}
Furthermore, we have 
\begin{equation} \label{eq:gnormdecompose}
   \E  || \hat{g}_\rho(\tau_N | \theta) - \nabla_{\theta} J_{\rho}(\theta) ||  \leq \E|| \hat{g}_\rho(\tau_N | \theta) - \E[\hat{g}_\rho(\tau_N | \theta)] || +  \E|| \E[\hat{g}_\rho(\tau_N | \theta)] - \nabla_{\theta} J_{\rho}(\theta) ||
\end{equation}
Since we also have $|| \hat{g}_\rho(\tau_N | \theta) || \leq \frac{G}{(1-\gamma)^2}$ from Lemma \ref{lemma_3}, we further have 
\begin{equation} \label{eq:g_variance}
\text{Var}(\hat{g}_\rho(\tau_N | \theta)) = \E|| \hat{g}_\rho(\tau_N | \theta) - \E[\hat{g}_\rho(\tau_N | \theta)] ||^2  \leq \frac{dG^2}{(1-\gamma)^4}    
\end{equation}
Since $\E|| \hat{g}_\rho(\tau_N | \theta) - \E[\hat{g}_\rho(\tau_N | \theta)] ||  \leq \sqrt{\E|| \hat{g}_\rho(\tau_N | \theta) - \E[\hat{g}_\rho(\tau_N | \theta)] ||^2 }$ due to Jensen's inequality, combining \eqref{eq:gexpectationnorm}, \eqref{eq:gnormdecompose} and \eqref{eq:g_variance} we obtain \eqref{eq:gnormbound}.
Similarly, 
\begin{align}
    || \E[\hat{F}_\rho(\tau_N | \theta)] - F_{\rho}(\theta) || &= || \E[\hat{F}_\rho(\tau_N | \theta)] - \E[ F(\tau | \theta)] || \nonumber\\
    & = \bigg|\bigg| \E\bigg[ (1-\gamma)\sum_{n=N}^{\infty} \left(  \nabla_\theta \log \pi_\theta(a_n | s_n) \otimes \nabla_\theta \log \pi_\theta(a_n | s_n) \gamma^n  \right)\bigg]\bigg|\bigg|   \nonumber\\
    & \leq G^2 \gamma^N \label{eq:Fexpectationnorm}
\end{align}
Furthermore, we have 
\begin{equation} \label{eq:Fnormdecompose}
   \E  || \hat{F}_\rho(\tau_N | \theta) - F_{\rho}(\theta) ||  \leq \E|| \hat{F}_\rho(\tau_N | \theta) - \E[\hat{F}_\rho(\tau_N | \theta)] || +  \E|| \E[\hat{F}_\rho(\tau_N | \theta)] - F_{\rho}(\theta) ||
\end{equation}
Since $|| \hat{F}_\rho(\tau_N | \theta) || \leq G^2$, which is by the definition of $\hat{F}_\rho(\tau_N | \theta)$, we further have 
\begin{equation} \label{eq:F_variance}
\text{Var}(\hat{F}_\rho(\tau_N | \theta)) = \E|| \hat{F}_\rho(\tau_N | \theta) - \E[\hat{F}_\rho(\tau_N | \theta)] ||^2  \leq dG^4   
\end{equation}
Since $\E|| \hat{F}_\rho(\tau_N | \theta) - \E[\hat{F}_\rho(\tau_N | \theta)] ||  \leq \sqrt{\E|| \hat{F}_\rho(\tau_N | \theta) - \E[\hat{g}_\rho(\tau_N | \theta)] ||^2 }$ due to Jensen's inequality, combining \eqref{eq:Fexpectationnorm}, \eqref{eq:Fnormdecompose} and \eqref{eq:F_variance} we obtain \eqref{eq:Fnormbound}.

\section{Proof of Lemma \ref{Fg_Qbound}} \label{appendix_FgQbound}
Since Algorithm \ref{algo:biased-Q} outputs an unbiased estimate of the input, we have that for any $\theta$, $\E[\hat{F}_\rho(\tau_N | \theta)] = E[\tilde{g}_h]$ and $\E[\hat{F}_\rho(\tau_N | \theta)] = E[\tilde{F}_h]$. Thus, \eqref{eq:Qgexpectation} and \eqref{eq:QFexpectation} directly follow from \eqref{eq:gexpectation} and \eqref{eq:Fexpectation}. For \eqref{eq:Qgnormbound} and \eqref{eq:QFnormbound}, we note that
\begin{equation}
    \E  \big[|| \tilde{g}_h - \nabla_{\theta} J_{\rho}(\theta) ||^2\big] = \E  \big[|| \tilde{g}_h - \E( \tilde{g}_h) + \E( \tilde{g}_h)  - \nabla_{\theta} J_{\rho}(\theta) ||^2\big] = \E \big[|| \tilde{g}_h -  \E( \tilde{g}_h)||^2\big] + ||\E( \tilde{g}_h)  - \nabla_{\theta} J_{\rho}(\theta) ||^2
\end{equation}
since the cross term contain $\E[\tilde{g}_h - \E( \tilde{g}_h)]$, which equals to zero. We further have
\begin{equation}
||\E( \tilde{g}_h)  - \nabla_{\theta} J_{\rho}(\theta) ||^2 \leq  G^2\bigg( \frac{N+1}{1-\gamma} + \frac{\gamma}{(1-\gamma)^2} \bigg)^2 \gamma ^{2N} \; \text{and} \;\E \big[|| \tilde{g}_h -  \E( \tilde{g}_h)||^2\big] \leq \hat{\sigma}_g^2
\end{equation}
Combining the above two equation we obtain \eqref{eq:Qgnormbound}. \eqref{eq:QFnormbound} can be derived by the same procedure. We have
\begin{equation}
    \E  \big[|| \tilde{F}_h - F_\rho(\theta) ||^2\big] = \E  \big[|| \tilde{F}_h - \E( \tilde{F}_h) + \E( \tilde{F}_h)  - F_\rho(\theta) ||^2\big] = \E \big[|| \tilde{F}_h -  \E( \tilde{F}_h)||^2\big] + ||\E( \tilde{F}_h)  - F_\rho(\theta)  ||^2
\end{equation}
since the cross term contain $\E[\tilde{F}_h - \E( \tilde{F}_h)]$, which equals to zero. We further have
\begin{equation}
||\E( \tilde{F}_h)  - F_\rho(\theta)  ||^2 \leq  \frac{G^4 \gamma^{2N}}{(1- \gamma)^2} \; \text{and} \;\E \big[|| \tilde{F}_h -  \E( \tilde{F}_h)||^2\big] \leq \hat{\sigma}_F^2
\end{equation}
Combining the above two equation we obtain \eqref{eq:QFnormbound}. 

\section{Proof of Lemma \ref{thm:npg-main}} \label{appendix_omegabound}
    Throughout the proof, we assume  a fixed $k^{th}$ iteration of outer loop with the fixed policy parameter $\theta_k$. Let $\nabla_{\omega} \tilde{L}_h = \nabla_{\omega} \tilde{L}(\omega, \tau_N) \big|_{\omega = \omega_h}  = \tilde{F}_h \omega_h - \tilde{g}_h $. To prove the first statement, observe the following relations.
\begin{align*}
    \|\omega_{h+1} - \omega^*_k\|^2 &= \|\omega_{h} - \alpha \nabla_{\omega} \tilde{L}_h - \omega_k^*\|^2\\
    &= \|\omega_{h} -\omega_k^*\|^2- 2\alpha \langle \omega_{h} -\omega_k^*, \nabla_{\omega} \tilde{L}_h \rangle +\alpha^2\| \nabla_{\omega} \tilde{L}_h \|^2\\
    &\overset{}{=} \|\omega_{h} -\omega_k^*\|^2 - 2\alpha \langle \omega_{h} -\omega_k^*, F_\rho(\theta_k)(\omega_{h} -\omega_k^*) \rangle \\
    &\hspace{1cm}- 2\alpha \langle \omega_{h} -\omega_k^*, \nabla_{\omega} \tilde{L}_h - F_\rho(\theta_k)(\omega_{h} -\omega_k^*) \rangle + \alpha^2\| \nabla_{\omega} \tilde{L}_h \|^2\\
    &\overset{(a)}{\leq} \|\omega_{h} -\omega_k^*\|^2 - 2\alpha\mu_F \|\omega_{h} -\omega_k^*\|^2 - 2\alpha \langle \omega_{h} -\omega_k^*, \nabla_{\omega} \tilde{L}_h - F_\rho(\theta_k)(\omega_{h} -\omega_k^*) \rangle\\
    &\hspace{1cm}+ 2\alpha^2\| \nabla_{\omega} \tilde{L}_h - F_\rho(\theta_k)(\omega_{h} -\omega_k^*)\|^2+ 2\alpha^2\| F_\rho(\theta_k)(\omega_{h} -\omega_k^*)\|^2\\
    &\overset{(b)}{\leq} \|\omega_{h} -\omega_k^*\|^2 - 2\alpha\mu_F \| \omega_{h} -\omega_k^*\|^2 - 2\alpha \langle \omega_{h} -\omega_k^*, \nabla_{\omega} \tilde{L}_h - F_\rho(\theta_k)(\omega_{h} -\omega_k^*) \rangle\\
    &\hspace{1cm}+ 2\alpha^2\| \nabla_{\omega} \tilde{L}_h - F_\rho(\theta_k)(\omega_{h} -\omega_k^*)\|^2+ 2G^4\alpha^2\|\omega_{h} -\omega_k^*\|^2
\end{align*}
where $(a)$ and $(b)$ follow from $\mu_F \leq \norm{F_\rho(\theta_k)}\leq G^2$. Taking conditional expectation on both sides, we obtain
\begin{align}
\label{eq:mid-exp-td}
    \E\left[\left\|\omega_{h+1} -\omega_k^*\right\|^2\right] &\leq (1- 2\alpha\mu_F +2G^4\alpha^2)\|\omega_{h} -\omega_k^*\|^2 - 2\alpha \langle \omega_{h} -\omega_k^*, \E \left[\nabla_{\omega} \tilde{L}_h - F_\rho(\theta_k)(\omega_{h} -\omega_k^*)\right] \rangle \nonumber\\
    &\hspace{1cm}+ 2\alpha^2\E\left\| \nabla_{\omega} \tilde{L}_h - F_\rho(\theta_k)(\omega_{h} -\omega_k^*)\right\|^2
\end{align}
Observe that the third term in \eqref{eq:mid-exp-td} can be bounded as follows.
\begin{align*}
    \|\nabla_{\omega} \tilde{L}_h - F_\rho(\theta_k)(\omega_{h} -\omega_k^*)\|^2 &= \| (\tilde{F}_h - F_\rho(\theta_k))(\omega_{h} -\omega_k^*) + (\tilde{F}_h - F_\rho(\theta_k))\omega_k^* + (\nabla_{\theta} J_{\rho}(\theta_k)-\tilde{g}_h)\|^2\\
    &\leq 3\| \tilde{F}_h - F_\rho(\theta_k)\|^2 \|\omega_{h} -\omega_k^*\|^2 + 3\|\tilde{F}_h - F_\rho(\theta_k)\|^2 \|\omega_k^*\|^2 + 3\|\nabla_{\theta} J_{\rho}(\theta_k)-\tilde{g}_h\|^2\\
    &\leq 3\| \tilde{F}_h - F_\rho(\theta_k)\|^2 \|\omega_{h} -\omega_k^*\|^2 + \frac{3\mu_F^{-2}G^2}{(1-\gamma)^4}\|\tilde{F}_h - F_\rho(\theta_k)\|^2 + 3\|\nabla_{\theta} J_{\rho}(\theta_k)-\tilde{g}_h\|^2
\end{align*}
where the last inequality follows from $\norm{\omega_k^*}^2=\norm{F_\rho(\theta_k)^{-1}\nabla_{\theta} J_{\rho}(\theta_k)}^2\leq \frac{\mu_F^{-2}G^2}{(1-\gamma)^4}$. Taking expectation yields
\begin{align}
    \E\| \nabla_{\omega} \tilde{L}_h &- F_\rho(\theta_k)(\omega_{h} -\omega_k^*) \|^2 \nonumber\\
    &\leq 3\E\| \tilde{F}_h - F_\rho(\theta_k)\|^2 \|\omega_{h} -\omega_k^*\|^2 + \frac{3\mu_F^{-2}G^2}{(1-\gamma)^4}\E\|\tilde{F}_h - F_\rho(\theta_k)\|^2 + 3\E\|\nabla_{\theta} J_{\rho}(\theta_k)-\tilde{g}_h\|^2 \nonumber\\
    &\leq 3\left[\hat{\sigma}_F^2 + \hat{\delta}_F^2\right] \norm{\omega_{h} -\omega_k^*}^2 + \frac{3\mu_F^{-2}G^2}{(1-\gamma)^4}\left(\hat{\sigma}_F^2 + \hat{\delta}_F^2\right)     + \left(3\hat{\sigma}_g^2 + 3\hat{\delta}_g^2\right)
\end{align}
The second term in \eqref{eq:mid-exp-td} can be bounded as
\begin{align}
    -\langle \omega_{h} -\omega_k^*,\  &\E [ \nabla_{\omega} \tilde{L}_h - F_\rho(\theta_k)(\omega_{h} -\omega_k^*) ] \rangle \nonumber\\
    &\leq \frac{\mu_F}{4} \| \omega_{h} -\omega_k^*\|^2 + \frac{1}{\mu_F}\left\|\E [\nabla_{\omega} \tilde{L}_h - F_\rho(\theta_k)(\omega_{h} -\omega_k^*) ]\right\|^2 \nonumber\\
    &\leq \frac{\mu_F}{4} \| \omega_{h} -\omega_k^*\|^2 + \dfrac{1}{\mu_F}\left\|\left\{\E[\tilde{F}_h]-F_\rho(\theta_k)\right\}\omega_h + \bigg\{ \nabla_{\theta} J_{\rho}(\theta_k)-\E\left[\tilde{g}_h\right]\bigg\}\right\|^2\nonumber\\
    &\leq \frac{\mu_F}{4} \| \omega_{h} -\omega_k^*\|^2 + \frac{\left(2\hat{\sigma}_F^2 + 2\hat{\delta}_F^2\right)\|\omega_h\|^2 +\left(2\hat{\sigma}_g^2 + 2\hat{\delta}_g^2\right)}{\mu_F}\nonumber\\
    &\leq \frac{\mu_F}{4} \|\omega_{h} -\omega_k^*\|^2 + \frac{\left(4\hat{\sigma}_F^2 + 4\hat{\delta}_F^2\right)\|\omega_{h} -\omega_k^*\|^2+\left(4\hat{\sigma}_F^2 + 4\hat{\delta}_F^2\right)\frac{\mu_F^{-2}G^2}{(1-\gamma)^4} + \left(2\hat{\sigma}_g^2 + 2\hat{\delta}_g^2\right)}{\mu_F}
\end{align}
where the last inequality follows from $\norm{\omega_k^*}^2=\norm{F_\rho(\theta_k)^{-1}\nabla_{\theta} J_{\rho}(\theta_k)}^2\leq \frac{\mu_F^{-2}G^2}{(1-\gamma)^4}$. Substituting the above bounds in \eqref{eq:mid-exp-td}, we have
\begin{align*}
    \E\left[\|\omega_{h+1} -\omega_k^*\|^2\right] 
    &\leq \left(1- \frac{3\alpha \mu_F}{2} +\dfrac{8\alpha\hat{\delta}_F^2}{\mu_F}+6\alpha^2\left(\hat{\sigma}_F^2 + \hat{\delta}_F^2\right)+2\alpha^2G^4\right)\|\omega_{h} -\omega_k^*\|^2  
    +\dfrac{4\alpha}{\mu_F}\left[2\hat{\delta}_F^2  \frac{\mu_F^{-2}G^2}{(1-\gamma)^4}+\hat{\delta}_g^2\right]\\
    &\hspace{1cm}+6\alpha^2\left[ \frac{\mu_F^{-2}G^2}{(1-\gamma)^4}\left(\hat{\sigma}_F^2 + \hat{\delta}_F^2 \right)+\left(\hat{\sigma}_g^2 + \hat{\delta}_g^2 \right)\right]
\end{align*}
For $\hat{\delta}_F \leq \mu_F/8$, and $\alpha \leq \mu_F/[4(6\hat{\sigma}_F^2 + 6\hat{\delta}_F^2 +2G^4)]$, we can modify the above inequality to the following. 
\begin{align*}
    &\E[\|\omega_{h+1} -\omega_k^*\|^2] \leq \left(1-{\alpha \mu_F}\right)\|\omega_{h} -\omega_k^*\|^2 +\dfrac{4\alpha}{\mu_F}\left[2\hat{\delta}_F^2  \frac{\mu_F^{-2}G^2}{(1-\gamma)^4}+\hat{\delta}_g^2\right]+6\alpha^2\left[ \frac{\mu_F^{-2}G^2}{(1-\gamma)^4}\left(\hat{\sigma}_F^2 + \hat{\delta}_F^2 \right)+\left(\hat{\sigma}_g^2 + \hat{\delta}_g^2 \right)\right]
\end{align*}
Taking expectation on both sides and unrolling the recursion yields
\small
\begin{align} \label{eq_lemma_x_H_recursion}
    &\E[\|\omega_{H} -\omega_k^*\|^2] \nonumber\\
    &\leq \left(1-{\alpha \mu_F}\right)^H\E\|\omega_{0} -\omega_k^*\|^2 + \sum_{h=0}^{H-1} \left(1-{\alpha \mu_F}\right)^{h}\left\{\dfrac{4\alpha}{\mu_F}\left[2\hat{\delta}_F^2  \frac{\mu_F^{-2}G^2}{(1-\gamma)^4}+\hat{\delta}_g^2\right]+6\alpha^2\left[ \frac{\mu_F^{-2}G^2}{(1-\gamma)^4}\left(\hat{\sigma}_F^2 + \hat{\delta}_F^2 \right)+\left(\hat{\sigma}_g^2 + \hat{\delta}_g^2 \right)\right]\right\}\nonumber\\
    &\leq \exp\left(-{H\alpha \mu_F}\right)\E\|\omega_{0} -\omega_k^*\|^2 + \frac{1}{\alpha \mu_F}\left\{\dfrac{4\alpha}{\mu_F}\left[2\hat{\delta}_F^2  \frac{\mu_F^{-2}G^2}{(1-\gamma)^4}+\hat{\delta}_g^2\right]+6\alpha^2\left[ \frac{\mu_F^{-2}G^2}{(1-\gamma)^4}\left(\hat{\sigma}_F^2 + \hat{\delta}_F^2 \right)+\left(\hat{\sigma}_g^2 + \hat{\delta}_g^2 \right)\right]\right\}\nonumber\\
    &= \exp\left(-{H\alpha \mu_F}\right)\E\|\omega_{0} -\omega_k^*\|^2 + \bigg\{4\mu_F^{-2}\left[2\hat{\delta}_F^2  \frac{\mu_F^{-2}G^2}{(1-\gamma)^4}+\hat{\delta}_g^2\right]+ 6\alpha\mu_F^{-1}\left[\frac{\mu_F^{-2}G^2}{(1-\gamma)^4}\left(\hat{\sigma}_F^2 + \hat{\delta}_F^2 \right)+\left(\hat{\sigma}_g^2 + \hat{\delta}_g^2 \right)\right]\bigg\}
\end{align}
\normalsize
To prove the second statement, observe that we have the following recursion.
\begin{align} 
    &\|\E[\omega_{h+1}] - \omega_k^*\|^2 = \|\E[\omega_{h}] - \alpha \E[\nabla_{\omega} \tilde{L}_h] - \omega_k^*\|^2\nonumber\\ 
    &= \|\E[\omega_{h}] - \omega_k^*\|^2 - 2\alpha \langle \E[\omega_{h}] - \omega_k^*, \E[{g}_h] \rangle +\alpha^2\| \E[\nabla_{\omega} \tilde{L}_h] \|^2\nonumber\\
    &\overset{}{=} \|\E[\omega_{h}] -\omega_k^*\|^2 - 2\alpha \langle \E[\omega_{h}] -\omega_k^*, F_\rho(\theta_k)(\E[\omega_h] - \omega_k^*) \rangle \nonumber\\
    &- 2\alpha \langle \E[\omega_{h}] -\omega_k^*, \E[\nabla_{\omega} \tilde{L}_h] - F_\rho(\theta_k)(\E[\omega_h] - \omega_k^*) \rangle + \alpha^2\| \E[\nabla_{\omega} \tilde{L}_h] \|^2\nonumber\\
    &\overset{(a)}{\leq} \|\E[\omega_{h}] -\omega_k^*\|^2 - 2\alpha\mu_F \| \E[\omega_{h}]
    -\omega_k^*\|^2 - 2\alpha \langle \E[\omega_{h}] -\omega_k^*, \E[\nabla_{\omega} \tilde{L}_h] - F_\rho(\theta_k)(\E[\omega_h] - \omega_k^*) \rangle\nonumber\\
    &\hspace{1cm}+ 2\alpha^2\| \E[\nabla_{\omega} \tilde{L}_h] - F_\rho(\theta_k)(\E[\omega_h] - \omega_k^*)\|^2+ 2\alpha^2\| F_\rho(\theta_k)(\E[\omega_h] - \omega_k^*)\|^2\nonumber\\
    &\overset{(b)}{\leq} \|\E[\omega_{h}] -\omega_k^*\|^2 - 2\alpha\mu_F \| \E[\omega_{h}] - \omega_k^*\|^2 - 2\alpha \langle \E[\omega_{h}] - \omega_k^*, \E[\nabla_{\omega} \tilde{L}_h] - F_\rho(\theta_k)(\E[\omega_h] - \omega_k^*) \rangle\nonumber\\
    &\hspace{1cm}+ 2\alpha^2\| \E[\nabla_{\omega} \tilde{L}_h] - F_\rho(\theta_k)(\E[\omega_h] - \omega_k^*)\|^2+ 2G^4\alpha^2\|\E[\omega_h] - \omega_k^*\|^2\nonumber\\
    &\leq (1- 2\alpha\mu_F +2G^4\alpha^2)\|\E[\omega_{h}] -\omega_k^*\|^2 - 2\alpha \langle \E[\omega_{h}] - \omega_k^*, \E[\nabla_{\omega} \tilde{L}_h] - F_\rho(\theta_k)(\E[\omega_h] - \omega_k^*) \rangle \nonumber\\
    & \hspace{1cm}+ 2\alpha^2\left\| \E[\nabla_{\omega} \tilde{L}_h] - F_\rho(\theta_k)(\E[\omega_h] - \omega_k^*)\right\|^2 \label{eq_appndx_lemma_exp_x_h_recursion}
\end{align}
where $(a)$ and $(b)$ follow from $\mu_F \leq \norm{F_\rho(\theta_k)}\leq G^2$. The third term in the last line of \eqref{eq_appndx_lemma_exp_x_h_recursion} can be bounded as follows.
\begin{align*}
    \|\E[\nabla_{\omega}& \tilde{L}_h] - F_\rho(\theta_k)(\E[\omega_h] - \omega_k^*)\|^2 \nonumber\\
    &= \left\| \E\left[( \tilde{F}_h - F_\rho(\theta_k))(\omega_h - \omega_k^*)\right] + (\E[ \tilde{F}_h] - F_\rho(\theta_k))\omega_k^* + (\nabla_{\theta} J_{\rho}(\theta_k)-\E[\tilde{g}_h])\right\|^2\\
    &\leq 3\E\left[\| \E[\tilde{F}_h] - F_\rho(\theta_k)\|^2 \|\omega_h - \omega_k^*\|^2\right] + 3\E\left[\|\E[\tilde{F}_h] - F_\rho(\theta_k)\|^2 \right]\|\omega_k^*\|^2 + 3\left\|\nabla_{\theta} J_{\rho}(\theta_k)-\E[\tilde{g}_h]\right\|^2\\
    &\leq 3\hat{\delta}_F^2\E\left[\|\omega_h - \omega_k^*\|^2\right] + \frac{3\mu_F^{-2}G^2}{(1-\gamma)^4}\hat{\delta}_F^2 + 3\hat{\delta}_g^2\\
    &\overset{(a)}{\leq}3\hat{\delta}_F^2\bigg\{\E\left[\|\omega_0 - \omega_k^*\|^2\right]+\alpha R_0+R_1\bigg\} + \frac{3\mu_F^{-2}G^2}{(1-\gamma)^4}\hat{\delta}_F^2 + 3\hat{\delta}_g^2
\end{align*}
where $R_0 =  6\mu_F^{-1}\left[\frac{\mu_F^{-2}G^2}{(1-\gamma)^4}\left(\hat{\sigma}_F^2 + \hat{\delta}_F^2 \right)+\left(\hat{\sigma}_g^2 + \hat{\delta}_g^2 \right)\right]$, $R_1 =4\mu_F^{-2}\left[2\hat{\delta}_F^2  \frac{\mu_F^{-2}G^2}{(1-\gamma)^4}+\hat{\delta}_g^2\right] $ and $(a)$ follows from \eqref{eq_lemma_x_H_recursion}. The second term in the last line of \eqref{eq_appndx_lemma_exp_x_h_recursion} can be bounded as follows.
\begin{align*}
    -\langle \E[\omega_{h}]& - \omega_k^*, \E \left[\E[\nabla_{\omega} \tilde{L}_h] - F_\rho(\theta_k)(\E[\omega_h] - \omega_k^*)\right] \rangle \\
    &\leq \frac{\mu_F}{4} \| \E[\omega_{h}] - \omega_k^*\|^2 + \frac{1}{\mu_F}\left\|\E[\nabla_{\omega} \tilde{L}_h] - F_\rho(\theta_k)(\E[\omega_h] - \omega_k^*)\right\|^2 \nonumber\\
    &\leq \frac{\mu_F}{4} \|\E[\omega_{h}] - \omega_k^*\|^2 + \dfrac{3}{\mu_F}\left[\hat{\delta}_F^2\bigg\{\E\left[\|\omega_0 - \omega_k^*\|^2\right]+ \alpha R_0+R_1\bigg\} + \frac{\mu_F^{-2}G^2}{(1-\gamma)^4}\hat{\delta}_F^2 + \hat{\delta}_g^2\right]\nonumber
\end{align*}

Substituting the above bounds in \eqref{eq_appndx_lemma_exp_x_h_recursion}, we obtain the following recursion.
\begin{align*}
    \|\E[\omega_{h+1}] - \omega_k^*\|^2 &\leq \left(1-\dfrac{3\alpha\mu_F}{2}+2G^4\alpha^2\right)\|\E[\omega_{h}] - \omega_k^*\|^2 \\
    &\hspace{1cm}+6\alpha\left(\alpha+\dfrac{1}{\mu_F}\right)\left[\hat{\delta}_F^2\bigg\{\E\left[\|\omega_0 - \omega_k^*\|^2\right]+ \alpha R_0+R_1 \bigg\} + \frac{\mu_F^{-2}G^2}{(1-\gamma)^4}\hat{\delta}_F^2 + \hat{\delta}_g^2\right]
\end{align*}
If $\alpha<\frac{\mu_F}{4G^4}$, the above bound implies the following.
\begin{align*}
    \|\E[\omega_{h+1}] - \omega_k^*\|^2 &\leq \left(1-{\alpha\mu_F}\right)\|\E[\omega_{h}] - \omega_k^*\|^2 \\
    &\hspace{1cm}+6\alpha\left(\alpha+\dfrac{1}{\mu_F}\right)\left[\hat{\delta}_F^2\bigg\{\E\left[\|\omega_0 - \omega_k^*\|^2\right]+ \alpha R_0+R_1\bigg\} + \frac{\mu_F^{-2}G^2}{(1-\gamma)^4}\hat{\delta}_F^2 + \hat{\delta}_g^2\right]
\end{align*}
Unrolling the above recursion, we obtain
\begin{align*}
    &\|\E[\omega_{H}] - \omega_k^*\|^2 \leq \left(1-{\alpha \mu_F}\right)^H\|\E[\omega_{0}] -\omega_k^*\|^2 \\
    &\hspace{1.5cm}+ \sum_{h=0}^{H-1} 6\left(1-{\alpha \mu_F}\right)^{h}\alpha\left(\alpha+\dfrac{1}{\mu_F}\right)\left[\hat{\delta}_F^2\bigg\{\E\left[\|\omega_0 - \omega_k^*\|^2\right]+ \alpha R_0+R_1 \bigg\} + \frac{\mu_F^{-2}G^2}{(1-\gamma)^4}\hat{\delta}_F^2 + \hat{\delta}_g^2\right]\\
    &\leq \exp\left(-{H\alpha\mu_F}\right)\|\E[\omega_{0}] - \omega_k^*\|^2 +\dfrac{6}{\mu_F}\left(\alpha+\dfrac{1}{\mu_F}\right)\left[\hat{\delta}_F^2\bigg\{\E\left[\|\omega_0 - \omega_k^*\|^2\right]+ \alpha R_0+R_1\bigg\} +  \frac{\mu_F^{-2}G^2}{(1-\gamma)^4}\hat{\delta}_F^2 + \hat{\delta}_g^2 \right]
\end{align*}

This concludes the result.

\section{Proof of Theorem \ref{theorem_final}}

Combining Lemma \ref{thm:npg-main} with Lemma \ref{lemma:local_global}, we get the following bound for appropriate choices of the learning rates as mentioned in the theorem.
\begin{equation}\label{eq:eq_appndx_54}
		\begin{split}
			J_\rho^*-\dfrac{1}{K}\sum_{k=0}^{K-1}\E[J_\rho(\theta_k)]
            &\leq \sqrt{\epsilon_{\mathrm{bias}}}
            +G\exp\left(-{\frac{H\alpha \mu_F}{2}}\right)\|\E[\omega_{0}] -\omega^*\|^2 + C_2\\
			&+\dfrac{B}{4L}\left(\dfrac{\mu_F^2}{G^2}+G^2\right) \exp\left(-{H\alpha \mu_F}\right)\E\|\omega_{0} -\omega^*\|^2 +C_3+C_4
        \end{split}
	\end{equation}
with $C_2 = G\sqrt{C_1}$, $C_3=\dfrac{B}{4L}\left(\dfrac{\mu_F^2}{G^2}+G^2\right)C_0$ and $C_4=\dfrac{G^2}{\mu_F^2K}\left(\dfrac{B}{1-\gamma}+4L\E_{s\sim d_\rho^{\pi^*}}[KL(\pi^*(\cdot\vert s)\Vert\pi_{\theta_0}(\cdot\vert s))]\right)$. Where $C_0$ and $C_1$ are defined in Lemma \ref{thm:npg-main}. By replacing $C_4=\cO{(\frac{1}{K})}$ we can express \eqref{eq:eq_appndx_54} as follows:
\begin{equation}
    J_\rho^*-\dfrac{1}{K}\sum_{k=0}^{K-1}\E[J_\rho(\theta_k)] \leq \sqrt{\epsilon_{\mathrm{bias}}} + \cO(e^{-H}) + {\cO} \left(\gamma^{2N} + \frac{\hat{\sigma}_F^2}{(1-\gamma)^4} + \hat{\sigma}_g^2\right) + \cO{(\frac{1}{K})}
\end{equation}

To obtain the optimality gap at most $\sqrt{\epsilon_{\mathrm{bias}}}+\epsilon$, we choose
$H=\mathcal{O}(\log (\epsilon^{-1}))$, $N=\mathcal{O}(\log (\epsilon^{-1}))$, $K={\mathcal{O}}(\epsilon^{-1})$, $\hat{\sigma}^2_g={\mathcal{O}}(\epsilon)$ and $\hat{\sigma}^2_F ={\mathcal{O}}((1-\gamma)^4\epsilon)$. By Theorem \ref{thm:quantum_sample_complexity}, $\Tilde{\mathcal{O}}\big(\frac{G^2 d}{(1-\gamma)^2\sqrt{\epsilon}}\big)$ queries of $\mathcal{U}_{F}$ and $\Tilde{\mathcal{O}}\big(\frac{G d}{(1-\gamma)^2 \sqrt{\epsilon}}\big)$ queries of $\mathcal{U}_{g}$.  This results in a sample complexity of $\tilde{\mathcal{O}}\big(HKN\cdot \frac{d}{(1-\gamma)^2 \sqrt{\epsilon}}\big)=\tilde{\mathcal{O}}(1/\epsilon^{1.5})$ and an iteration complexity of $\mathcal{O}(K)=\tilde{\mathcal{O}}(1/\epsilon)$.


\end{document}